\documentclass[article, onecolumn,superscriptaddress,nofootinbib,notitlepage,10pt]{revtex4-1}
 \usepackage[utf8]{inputenc}
\usepackage{amsmath,amsfonts,amssymb,amsthm,graphicx,bbm,enumitem,times}
\usepackage{xcolor}
\usepackage{mathtools}
\usepackage{graphicx} 
\usepackage{float}    
\usepackage{verbatim} 
\usepackage{amsmath}  
\usepackage{amssymb}  
\usepackage{mathtools}
\usepackage{listings} 
\usepackage{eufrak}   
\usepackage[pdftex]{hyperref}           
\usepackage[caption=false]{subfig}
\usepackage{tikz}
\usepackage{dsfont}
\usetikzlibrary{shapes}

\topmargin=-1.5cm
\oddsidemargin=.2in
\evensidemargin=.2in
\textheight=8.4in
\textwidth=6in
\setlength{\parindent}{0pt}

\newtheorem{dfn}{Definition}
\newtheorem{thm}{Theorem}
\newtheorem{lm}[thm]{Lemma}
\newtheorem{cor}{Corollary}


\newcommand{\R}{\mathbb{R}}
\newcommand{\N}{\mathbb{N}}
\newcommand{\C}{\mathbb{C}}

\newcommand{\momt}{q}

\DeclareMathOperator{\tr}{tr}

\DeclareMathOperator{\supp}{supp}

\newcommand{\hc}[1]{#1{}^\dag} 
\newcommand{\lhc}[1]{#1^\dag} 


\newcommand{\eexp}[1]{e^{#1}}
\newcommand{\Parts}{\mathcal{P}}

\newcommand{\cuo}{w} 


\newcommand{\ketbra}[2]{|#1\rangle\langle#2|}

\newcommand{\states}[1]{\mathcal{D}(#1)}


\newcommand{\maj}{m}

\newcommand{\id}{{\mathds 1} }


\newcommand{\acomm}[2]{\{#1,#2\}}

\newcommand{\iu}{i}
\newcommand{\nummodes}{K}	
\newcommand{\syssize}{V}	
\newcommand{\numlocmodes}{p}

\begin{document}
\title{A fermionic de Finetti theorem}
\author{Christian Krumnow}
\affiliation{Dahlem Center for Complex Quantum Systems, Freie Universit{\"a}t Berlin, 14195 Berlin, Germany}
\author{Zolt\'an Zimbor\'as}
\affiliation{Dahlem Center for Complex Quantum Systems, Freie Universit{\"a}t Berlin, 14195 Berlin, Germany}
\affiliation{Department of Theoretical Physics, Wigner Research Centre for Physics, Hungarian Academy of Sciences, P.O. Box 49, H-1525 Budapest, Hungary}
\author{Jens Eisert}
\affiliation{Dahlem Center for Complex Quantum Systems, Freie Universit{\"a}t Berlin, 14195 Berlin, Germany}
\date{\today}

\begin{abstract}
Quantum versions of de Finetti's theorem are powerful tools, yielding conceptually important insights into the security of key distribution protocols or tomography schemes and allowing to bound the error made by mean-field approaches. Such theorems link the symmetry of a quantum state under the exchange of subsystems to negligible quantum correlations and are well understood and established in the context of distinguishable particles. In this work, we derive a de Finetti theorem for finite sized Majorana fermionic systems. It is shown, much reflecting the spirit of other quantum de Finetti theorems, that a state which is invariant under certain permutations of modes loses most of its anti-symmetric character and is locally well described by a mode separable state. We discuss the structure of the resulting mode separable states and establish in specific instances a quantitative link to the quality of Hartree-Fock approximation of quantum systems. We hint at a link to generalized Pauli principles for one-body reduced density operators. Finally, building upon the obtained de Finetti theorem, we generalize and extend the applicability of Hudson's fermionic central limit theorem.
\end{abstract}

\maketitle

\section{Introduction}
Being first formulated for infinite systems of distinguishable particles \cite{Stromer1969,Hudson1976,Caves2002}, a body of finite sized instances of \emph{quantum de Finetti theorems} has been developed and improved in recent years \cite{Koenig2005,Christandl2007,Renner2007,RennerThesis,Renner2009,Koenig2009,Christandl2009,Brandao2011,WehnerDoherty,Li2015,Brandao2017}. 
Their essential and common feature is that they allow to bound the suppression of quantum 
correlations in reduced states of
quantum states that exhibit a \emph{permutation invariance}.
In their basic readings for finite systems \cite{Koenig2005,Christandl2007}, they state that local reductions of a quantum state which is invariant under the exchange of parts of the system are in trace-distance well approximated by convex combinations of i.i.d.\  product states. 
Triggered by these initial results, different ramifications have been explored. Relaxing, for instance, the assumption of obtaining i.i.d.\  product states allows to consider large subsystems \cite{Renner2007,RennerThesis} while changing the distance measure to the operational distinction using local operations and classical communication only (LOCC norms) alters the sensitivity of the resulting bounds to the dimension of the local Hilbert spaces from linear to logarithmic and can therefore be applied in more general settings \cite{Brandao2011,Li2015,Brandao2017}.

These results have gained a considerable attention in recent years specifically in
the context of \emph{quantum information theory}. Here, they are important as they yield insights into tomography problems \cite{Caves2002,Renner2007}, allow to prove the general security of quantum key distribution protocols \cite{RennerThesis,Renner2009} or to analyze more general settings in hypothesis testing schemes \cite{Brandao2010}.
At the same time, they give rise to quasipolynomial time algorithms for entanglement testing \cite{Brandao2017}.

In addition to these important uses for problems arising in quantum information theory, de Finetti theorems 
have key implications to problems in \emph{quantum many-body physics}. They immediately
yield bounds on the accuracy of \emph{mean field approximations} employed on permutation invariant systems for distinguishable particles. In this context, 
it is even possible to lift the rather restrictive assumption of permutation invariance and one can derive bounds based on the connectivity of the systems interaction graph \cite{Brandao2016}, while maintaining much of the 
spirit of the original statement. In bosonic systems, 
naturally featuring a permutation invariance of particles, de Finetti theorems control the use of the well established mean field description based on the discrete Gross-Pitaevskii equation \cite{Trimborn2016}. What is more,
bosonic Gaussian de Finetti theorems have been considered that resemble the results obtained here
\cite{Leverrier}.

For fermionic systems, the above de Finetti theorems when literally applied to a  
first and second quantized reading are only of limited use, due to the intrinsic anti-symmetry constraints of the fermionic states. As such, they also do not allow to control and bound mean field solutions. 
This seems a particularly grave omission in the light of the fact that mean field approaches
are key to our understanding of interacting fermionic
systems. They constitute an essential tool to understand fundamental properties of 
fermionic systems arising in the context of condensed matter theory and quantum chemistry. 
Most prominently, the \emph{Hartree-Fock approximation}, the fermionic mean field approximation on the level of particles, is often able to capture properties of interacting systems surprisingly accurately and provides a starting point of more involved numerical and analytical approaches \cite{Ashcroft1976,Szabo1996,Ciarlet2003}.
Next to the Hartree-Fock approximation other mean field approaches based on product states on the level of single particle modes can be introduced \cite{Banuls2007,Kraus2013}. 
Understanding and bounding on a rigorous level the validity of these mean field approaches and revealing, just as in the case of distinguishable particles in Ref.~\cite{Brandao2016}, the underlying structures necessary for their success is therefore highly desirable. In fact, what seems urgently missing in many situations
are performance guarantees for
Hartree-Fock approaches.

As for bosonic or distinguishable particles, de Finetti type theorems promise a way forward here. 
In Refs.\ \cite{Crismale2012,Kraus2013} de Finetti type theorems are provided for fermionic systems. These theorems and investigations characterize the set of states which are invariant under an arbitrary permutation of the fermionic modes in the thermodynamic limit. In both cases a full permutation invariance in the state is assumed which combined with the canonical anti-symmetric structure of fermionic systems leads to cancellations in expectation values, as we will argue.
What is more, precise bounds for finite-sized systems in trace norm are so far not in the focus of attention.

Extending and complementing the results, in this work, we derive a \emph{fermionic mode de Finetti theorem for finite system sizes}, much in the spirit of Refs.\ \cite{Koenig2005,Christandl2007}. In addition we show that we can derive our result without interfering by assumption with the anti-symmetry of fermionic states. In contrast, we find 
that given a relaxed version of permutation invariance of the fermionic state defined in detail below, the anti-symmetric character of the state vanishes in the same way as the quantum correlations. 
In addition, we discuss the structure of the obtained product states and relate them in special cases to 
fermionic Gaussian states.
With this, we provide a stepping stone towards understanding and bounding mean field approximation such as the Hartree-Fock approach in finite fermionic systems.

Further, we argue that our theorem naturally enables us to extend results which are originally formulated for i.i.d.\  product states, just as in the case of distinguishable particles. We make this notion explicit by discussing fermionic central limit theorems which in combination with our theorem yield the structural insight that permutation invariant states appear to be convex combinations of Gaussian states when probed on large scales  \cite{Hudson1973}. By making this step, 
we provide instances in which 
central limit type arguments hold away from the case of i.i.d.\  product states or states with 
clustering correlations \cite{goderis1989,matsui2002}. 

This work is organized as follows. We start by 
introducing our setting and fixing the necessary notation and definitions. We then prove our main result, a mode de Finetti theorem for finite fermionic systems stated in Theorem \ref{thm:deFinetti}. We conclude by discussing the structure of the obtained product states and implications in the approximation of permutation invariant ground states. In doing so, we reconsider Hudson's fermionic central limit theorem \cite{Hudson1973} for finite sized systems in the appendix and show that a permutation invariant state is approximately a convex mixture of Gaussian states in Fourier space.

\section{Setting and preparation}

\subsection{Definitions}
In the following we consider a finite \emph{fermionic lattice-system} with $\syssize$ sites and $\numlocmodes$ fermionic modes per side. 
To each of the $\nummodes = \syssize \numlocmodes$ modes, we associate the creation and annihilation operators 
$\hc{f_j^\alpha}$ and $f_j^\alpha$ with $\alpha\in[\numlocmodes]=\{1,\dots,\numlocmodes\}$ and $j\in [\syssize]$, which fulfill the canonical \emph{anti-commutation relation} 
\begin{equation}
  \acomm{\hc{f_j^\alpha}}{f_k^\beta} = \delta_{j,k}\delta_{\alpha,\beta}\id,\qquad\acomm{f_j^\alpha}{f_k^\beta} = 0, \quad \forall j,k \in [\syssize] \, , \; \forall \alpha, \beta \in [\numlocmodes] .
\end{equation}
It is convenient to introduce the \emph{Majorana operators}
\begin{align}
 \maj_{j}^{2\alpha-1} & = \hc{f_j^\alpha}+f^\alpha_j,\\
 \maj_{j}^{2\alpha}   & = \iu(\hc{f^\alpha_j}-f^{\alpha}_j),
\end{align}
which satisfy the Majorana anti-commutation relation
\begin{align}
 \acomm{m^\alpha_x}{m^\beta_y} = 2\delta_{x,y}\delta_{\alpha,\beta} \id.
\end{align}
We denote by $\mathcal{F}_\nummodes$, the \emph{fermionic Fock space} of $\nummodes$ modes and by $\states{\mathcal{F}_\nummodes}$ the set of fermionic states $\rho$ on $\nummodes$ modes respecting the 
\emph{parity superselection rule} (i.e., commuting with the global parity operator \cite{Banuls2007, Zimboras2014}).
For a permutation $\pi\in S_\syssize$, we define for a given product of Majorana operators $\maj_{j_1}^{\alpha_1}\ldots\maj_{j_r}^{\alpha_r}$ the notation 
\begin{align}
\pi(\maj_{j_1}^{\alpha_1}\ldots\maj_{j_r}^{\alpha_r}) = \maj_{\pi(j_1)}^{\alpha_1}\ldots\maj_{\pi(j_r)}^{\alpha_r},
\end{align}
which extends naturally to general fermionic operators as they can be uniquely 
expanded in the Majorana operator basis. We are now in the position to define a key concept, the
\emph{permutationally invariant fermionic states}.

\begin{dfn}[Permutation invariant fermionic state]
\label{def:PermuInvFermStates}
Given a fermionic system of $\syssize$ sites with $\numlocmodes$ modes per sites and Majorana operators $\{m_x^\alpha\}_{(x,\alpha)\in[\syssize]\times[\numlocmodes]}$, a fermionic state $\rho$ respecting the fermionic superselection rule is called permutation invariant if it fulfills the conditions
\begin{itemize}
 \item[(1)] for all $(j_1,\alpha_1)<\ldots<(j_r,\alpha_r)\in([\syssize]\times[\numlocmodes])^{\times r}$ and $\pi\in S_\syssize$ with $(\pi(j_1),\alpha_1)<\ldots<(\pi(j_r),\alpha_r)$ preserving that order we have
 \begin{equation}
 \tr\Bigl(\rho\  \maj_{j_1}^{\alpha_1}\ldots \maj_{j_r}^{\alpha_r}\Bigr) = \tr\Bigl(\rho\  \pi(\maj_{j_1}^{\alpha_1}\ldots \maj_{j_r}^{\alpha_r})\Bigr),
\end{equation}
 \item[(2)] for all $(j_1,\alpha_1)<\ldots<(j_r,\alpha_r)\in([\syssize]\times[\numlocmodes])^{\times r}$ with $|\{\alpha_k|j_k=j\}|$ even for all $j\in[\syssize]$ and all $\pi\in S_\syssize$ we have
 \begin{equation}
 \tr\Bigl(\rho\  \maj_{j_1}^{\alpha_1}\ldots \maj_{j_r}^{\alpha_r}\Bigr) = \tr\Bigl(\rho\  \pi(\maj_{j_1}^{\alpha_1}\ldots \maj_{j_r}^{\alpha_r})\Bigr).
\end{equation}
\end{itemize}
\end{dfn}

\subsection{Preliminaries}

Note that once we picked an arbitrary ordering of sites, the state is only permutation invariant with respect to general forward permutations which especially do not exchange odd operators (Condition (1)) and the general permutation of even operators (Condition (2)) which commute if supported on different sites. The finite sized version of the fully permutation invariant states considered in Refs.\ \cite{Crismale2012,Kraus2013} are contained in this definition.  Further, however, we allow for natural signs appearing for fermionic states which are lost for fully permutation invariant states.
Consider in the simplest case $\tr(\rho\,\maj_1^1\maj_2^1)$ and a permutation $\pi$ that would exchange sites $1$ and $2$. We obtain for fully permutation invariant states and from the Majorana anti-commutation relations $\tr(\rho\,\maj_1^1\maj_2^1) = \tr(\rho\,\maj_2^1\maj_1^1) =  -\tr(\rho\,\maj_1^1\maj_2^1)$ which of course leads to $\tr(\rho\,\maj_1^1\maj_2^1)=0$. By not restricting these permutations explicitly in Definition \ref{def:PermuInvFermStates}, we implicitly allow for these natural signs and do not assume the corresponding expectation values to vanish trivially.
By this, Definition \ref{def:PermuInvFermStates} is in more general than full permutation invariance as for instance the states
\begin{equation}
 \rho = \frac{1}{(2^\numlocmodes)^\syssize}\left(\id +\iu\tan\left(\frac{\pi}{2\syssize}\right)\mu\sum\limits_{\substack{j,l\in[\syssize]:\\j<l}}\maj_j^1\maj_l^1\right)
\end{equation}
with $\mu\in[-1,1]$ have the non-trivial expectation values for $a\neq b$
\begin{equation}
 \tr(\rho\,\maj_a^1\maj_b^1) = \begin{cases}\iu\tan\left(\frac{\pi}{2\syssize}\right)\mu&\mathrm{if\,}b<a\\-\iu\tan\left(\frac{\pi}{2\syssize}\right)\mu&\mathrm{if\,}a<b\\\end{cases}\quad.
\end{equation}
Hence, the state is permutation invariant according to Definition \ref{def:PermuInvFermStates} and our main result applies to it but fails to be fully permutation invariant for $\mu\neq0$.
We next  define the \emph{parity operators} of a site $j\in[\syssize]$ as
\begin{equation}
 P_j = \prod\limits_{\alpha\in[\numlocmodes]} \Bigl(\id-2\hc{f_j^\alpha}f_j^\alpha\Bigr) = (-\iu)^\numlocmodes \prod\limits_{\alpha=1}^\numlocmodes \maj^{2\alpha-1}_j\maj^{2\alpha}_j
\end{equation}
and for a generic operator $P$ the maps $C^\sigma_{P}$ with $\sigma = \pm$ by their action on an arbitrary operator $X$
\begin{align}
 C^\sigma_{P}(X) = \frac{1}{2}\Bigl(X + \sigma P X P\Bigr).
\end{align}
The map $C^\sigma_{P_j}$ for $\sigma=+/-$ erases all terms from $X$ which involve an odd/even number of Majorana operators on site $j$ which can be verified by noting that
\begin{align}
 C^{+/-}_{P_j}(\maj^{\alpha_1}_{j}\ldots \maj^{\alpha_r}_{j}) = 
 \begin{cases}
  \maj^{\alpha_1}_{j}\ldots \maj^{\alpha_r}_{j}&\quad\text{for even/odd }r\\
  0&\quad\text{for odd/even }r
 \end{cases}\quad.
\end{align}
We will use the notation that $C^+_{P_j}(X)$ is called even on site $j$ and $C^-_{P_j}(X)$ odd on site $j$.
The map
\begin{align}
C := C^+_{P_\syssize}\circ \dots \circ C^+_{P_1}
\end{align}
restricted to the its action on states 
 constitutes a \emph{quantum channel} with $\rho\mapsto C(\rho)\in\states{\mathcal{F}_\nummodes}$ is locally even on all sites.
The expectation values of $\rho$ and $C(\rho)$ are closely related. For any Majorana word $A = \maj_{j_1}^{\alpha_1}\dots \maj_{j_r}^{\alpha_r}$ by using the cyclicity of the trace we have
\begin{align}
 \tr\Bigl[C(\rho) A\Bigr] = \tr\Bigl[\rho\, C(A)\Bigr] = \begin{cases}
                                                        \tr(\rho A)&\quad\text{if }A\text{ is even on all sites}\\
                                                        0&\quad\text{else}
                                                       \end{cases}\quad.
\end{align}

In view of analyzing the structure of fermionic mode product states using Hudson's central limit theorem \cite{Hudson1973}, we further define for a fermionic system with $\nummodes$ modes, creation and annihilation operators $\hc{f}_j$ and $f_j$ for $j\in[\nummodes]$ and state $\rho$ the cumulants $K_\cuo^{\rho}$ with $\cuo=2,4,\ldots,2\nummodes$ via
\begin{equation}
 \tr(\rho f_{j_1}^{c_1}\ldots f_{j_\cuo}^{c_\cuo}) = \sum\limits_{P\in \Parts_{[\cuo]}}\sigma_P \prod\limits_{p\in P} K_{|p|}^{\rho}((f_{j_{k}}^{c_{k}})_{k\in p})
\end{equation}
where $c_j = -1,1$ and $f_j^{1} = f_j$ and $f_j^{-1} = \hc{f}_j$, $\Parts_{[\cuo]}$ denotes the 
\emph{set of all partitions} of the set $[\cuo]$ into increasingly ordered parts of even size and $\sigma_P$ denotes the sign of the permutation $\pi$ which orders the sequence $(k)_{k\in p,p\in P}$.
In order to prove our main result we need two small preparatory lemmata.
First we will need the general norm bound.

\begin{lm}[Norm bound]
 Given a general operator $A$ and an Hermitian operator $P$ with $\|P\|=1$. We can then bound the operator norm of the operators
 \begin{align}
  C_P^+(A) &= \frac{A + PAP}{2} = \frac{\id+P}{2}\, A\, \frac{\id+P}{2} + \frac{\id-P}{2}\, A\, \frac{\id-P}{2},\\
  C_P^-(A) &= \frac{A - PAP}{2} = \frac{\id-P}{2}\, A\, \frac{\id+P}{2} + \frac{\id+P}{2}\, A\, \frac{\id-P}{2}
 \end{align}
by $\|C_P^+(A)\|\leq \|A\|$ and $\|C_P^-(A)\|\leq \|A\|$.\label{lm:ParityNormBound}
\end{lm}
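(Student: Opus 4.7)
The plan is to take the shortest route: apply the triangle inequality directly to the defining expressions $C_P^\pm(A) = \tfrac{1}{2}(A \pm PAP)$. This gives
\begin{equation}
\|C_P^\pm(A)\| \leq \tfrac{1}{2}\|A\| + \tfrac{1}{2}\|PAP\|,
\end{equation}
and then sub-multiplicativity of the operator norm together with the assumption $\|P\|=1$ yields $\|PAP\| \leq \|P\|\,\|A\|\,\|P\| = \|A\|$, so $\|C_P^\pm(A)\| \leq \|A\|$. This argument does not even use Hermiticity of $P$; it only uses that $\|P\|=1$.

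I would, however, also note the interpretation encoded in the second equality of the statement, since that is what justifies writing the lemma in that form and is what actually gets used conceptually in the sequel (where $P = P_j$ is a parity operator and satisfies $P^2 = \id$). Under $P^2 = \id$, the operators $\Pi_\pm := (\id \pm P)/2$ are complementary orthogonal projectors, and one checks by direct expansion that
\begin{equation}
\tfrac{1}{2}(A + PAP) = \Pi_+ A \Pi_+ + \Pi_- A \Pi_-,\qquad \tfrac{1}{2}(A - PAP) = \Pi_+ A \Pi_- + \Pi_- A \Pi_+,
\end{equation}
so $C_P^+(A)$ has block-diagonal form and $C_P^-(A)$ block-off-diagonal form with respect to the splitting $\hil = \Pi_+\hil \oplus \Pi_-\hil$. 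Since the operator norm of a block-diagonal operator equals the maximum of the block norms, and since compressions by orthogonal projections are contractive (i.e.\ $\|\Pi_\sigma A \Pi_{\sigma'}\| \leq \|A\|$), one again obtains $\|C_P^\pm(A)\| \leq \|A\|$.

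There is essentially no obstacle here; the only subtlety is that the second equality in the statement only literally makes sense as a decomposition into orthogonal sectors when $P^2 = \id$, which is the case of interest (parity operators). The first, triangle-inequality route is completely general in $P$ and should be presented as the main proof, with the block decomposition mentioned as the structural reason behind the bound.
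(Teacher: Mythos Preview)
Your proof is correct and is essentially the same as the paper's: the paper also applies the triangle inequality to $C_P^\sigma(A)=\tfrac{1}{2}(A+\sigma PAP)$ and then uses submultiplicativity together with $\|P\|=1$, merely writing the two summands first as $\tfrac{1}{2}(C_P^+(A)+C_P^-(A))=A/2$ and $\tfrac{1}{2}(C_P^+(A)-C_P^-(A))=PAP/2$ before bounding. Your additional remarks on the block decomposition under $P^2=\id$ are accurate and go slightly beyond what the paper spells out.
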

\begin{proof} This is a consequence of
 \begin{align}
  \|C_P^\sigma (A)\| &= \left\|\frac{C_P^+(A)+C_P^-(A)}{2}+\sigma\frac{C_P^+(A)-C_P^-(A)}{2}\right\| 
  	\leq \left\|\frac{C_P^+(A)+C_P^-(A)}{2}\right\|+\left\|\frac{C_P^+(A)-C_P^-(A)}{2}\right\|\\
  	& = \|A\|/2 + \|PAP\|/2
  	 \leq \|A\| /2 + \|P\|^2 \|A\|/2
  	 = \|A\|,
 \end{align}
where we have used $\|P\|=1$ by assumption.
\end{proof}

In addition to the above norm bound, we will use the following simple variant of the Cauchy-Schwarz inequality:
\begin{lm}[Variant of Cauchy-Schwarz inequality]\label{lm:CS_bound}
 Given two operators $\rho$ and $A$ with $\tr(\rho) = 1$, $\rho = \hc{\rho}$ and $\rho \geq 0$ we have
 \begin{align}
  \|\tr (\rho A)\|^2\leq \tr(\rho A\hc{A}).
 \end{align}
\end{lm}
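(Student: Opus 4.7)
The plan is to reduce the statement to the standard Cauchy--Schwarz inequality on the Hilbert--Schmidt inner product space of operators, using the positive square root of $\rho$ to factor the density operator symmetrically. Since $\rho = \hc{\rho}$ and $\rho \geq 0$, the spectral theorem furnishes a unique positive Hermitian square root $\rho^{1/2}$ with $\rho^{1/2}\rho^{1/2} = \rho$ and $\hc{(\rho^{1/2})} = \rho^{1/2}$.

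Using this, I would rewrite
\begin{equation}
\tr(\rho A) \,=\, \tr\bigl(\rho^{1/2}\,\rho^{1/2} A\bigr) \,=\, \tr\bigl(\hc{(\rho^{1/2})}\,(\rho^{1/2} A)\bigr) \,=\, \bigl\langle \rho^{1/2},\, \rho^{1/2} A\bigr\rangle_{\mathrm{HS}},
\end{equation}
where $\langle X, Y\rangle_{\mathrm{HS}} := \tr(\hc{X} Y)$ denotes the Hilbert--Schmidt inner product on the space of operators. Applying the standard Cauchy--Schwarz inequality in this inner product space gives
\begin{equation}
|\tr(\rho A)|^2 \,\leq\, \langle \rho^{1/2},\rho^{1/2}\rangle_{\mathrm{HS}}\,\langle \rho^{1/2} A,\rho^{1/2} A\rangle_{\mathrm{HS}} \,=\, \tr(\rho)\,\tr(\hc{A}\rho A).
\end{equation}
Substituting the hypothesis $\tr(\rho) = 1$ and using cyclicity of the trace to rewrite $\tr(\hc{A}\rho A) = \tr(\rho A\hc{A})$ yields the claim.

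There is essentially no obstacle: the inequality is a one-line corollary of Cauchy--Schwarz once $\rho$ has been split symmetrically as $\rho^{1/2}\rho^{1/2}$. The only minor subtlety is that the right-hand side involves $A\hc{A}$ rather than the more commonly encountered $\hc{A}A$; this arises naturally from the chosen placement of $A$ in the factoring together with one use of cyclicity at the end, and the mirror-image choice would deliver the other variant.
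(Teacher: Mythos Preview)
Your proof is correct and is essentially identical to the paper's own argument: both take the Hermitian square root $\rho^{1/2}$, apply the Cauchy--Schwarz inequality for the Hilbert--Schmidt inner product to the pair $(\rho^{1/2},\,\rho^{1/2}A)$, and then invoke $\tr(\rho)=1$ together with cyclicity of the trace to obtain $\tr(\rho A\hc{A})$ on the right-hand side.
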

\begin{proof}
 As $\rho$ is positive and Hermitian, $\sqrt{\rho}$ exists and is Hermitian such that we obtain with the Cauchy-Schwarz inequality
 \begin{align}
  \|\tr(\hc{\sqrt{\rho}}\sqrt{\rho}A)\|^2 \leq \tr(\hc{\sqrt{\rho}}\sqrt{\rho})\tr(\hc{A}\hc{\sqrt{\rho}}\sqrt{\rho}A).
 \end{align}
From the normalization of $\rho$ and the cyclicity of the trace follows the claim.
\end{proof}

\section{A fermionic de Finetti theorem for finite systems}
We now proceed to prove our main result stated in Theorem \ref{thm:deFinetti}. 
In Lemma \ref{lm:PermInvSuppression}, we will first show  that in permutation invariant fermionic states terms sensitive to the fermionic anti-symmetry are suppressed in the system size such that  essentially the fermionic character of the system is lost.
More concretely, if $\rho$ is a permutation invariant state then $\rho$ and $C(\rho)$ turn out to be approximately locally indistinguishable.
We then proceed in Theorem \ref{thm:deFinetti} to exploit this fact by approximating a permutation invariant fermionic state with a permutation invariant state of qubits using the Jordan-Wigner transformation which allows us to employ standard quantum de Finetti theorems for finite systems in order to obtain the final result.

\subsection{Suppression of the anti-symmetric character}

We start by discussing the suppression of the anti-symmetric character of permutation invariant fermionic states
in trace norm.

\begin{lm}[Suppression of the anti-symmetric character]
\label{lm:PermInvSuppression}
 Let $\rho$ be a permutation invariant fermionic state on a system of $\syssize\geq6$ sites with $\numlocmodes$ modes per site. Then for any $k<\syssize$ we have that
 \begin{align}
  \|\tr_{[\syssize]{\backslash[k]}}(\rho) - \tr_{[\syssize]{\backslash[k]}}[C(\rho)]\|_1 \leq \frac{2}{\sqrt{3}}\frac{2^{2\numlocmodes} \sqrt{k-1}^{3}}{\syssize},
 \end{align}
where $C$ is the quantum channel introduced above and $\tr_{[\syssize]{\backslash[k]}}(\omega)$ denotes the reduced state of $\omega\in\{\rho,C(\rho)\}$ to the first $k$ sites.
\end{lm}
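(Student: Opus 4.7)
My plan is to proceed by trace-norm duality, telescope $C^{[k]}$ into single-site parity projections, and then bound Majorana-word expectations by permutation-invariant averaging combined with Lemma~\ref{lm:CS_bound}. By duality,
\[
\|\rho_k - C(\rho)_k\|_1 = \sup_{\|Y\|\leq 1,\,Y\text{ on }[k]}\bigl|\tr[\rho\,(Y-C^{[k]}(Y))]\bigr|,
\]
where $C^{[k]}:=C^+_{P_k}\circ\cdots\circ C^+_{P_1}$ (the $C^+_{P_j}$ with $j\not\in[k]$ drop out under the partial trace). A telescoping yields $Y-C^{[k]}(Y) = \sum_{j=1}^k\tilde Y_j$ with $\tilde Y_j:=C^-_{P_j}\bigl(\prod_{i<j}C^+_{P_i}(Y)\bigr)$ odd on site $j$, even on $1,\ldots,j-1$, and $\|\tilde Y_j\|\leq 1$ by iterating Lemma~\ref{lm:ParityNormBound}.

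Expanding each $\tilde Y_j=\sum_S\hat y_S M_S$ in the Majorana basis on $[k]$, orthonormality gives $\sum_S|\hat y_S|^2\leq\|\tilde Y_j\|^2\leq 1$, so Cauchy--Schwarz implies $|\tr(\rho\tilde Y_j)|^2\leq\sum_S|\tr(\rho M_S)|^2$. Each contributing $M_S$ has odd parity on $j$ and even total parity (by parity superselection of $\rho$), hence site support of size $s\geq 2$. Permutation invariance (Definition~\ref{def:PermuInvFermStates}~(1)) then gives $\tr(\rho M_S)=\tr(\rho\bar M_S)$ with $\bar M_S$ the average of $M_S$ over all order-preserving embeddings of its site support into $[V]$, and Lemma~\ref{lm:CS_bound} yields $|\tr(\rho M_S)|^2\leq\tr\bigl(\rho\,\bar M_S\hc{\bar M_S}\bigr)$.

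The main quantitative step is the recursive estimate $|\tr(\rho M_S)|^2\leq C/V^s$. Computing $\bar M_S\hc{\bar M_S}$, the $\binom{V}{s}^2$ products split by the overlap size of their two ordered $s$-tuples; the partial-overlap terms cancel in pairs thanks to the anti-commutation of Majoranas on disjoint sites (for $s=2$ one directly finds $\bar M_S\hc{\bar M_S}=\binom{V}{2}^{-2}\bigl(\binom{V}{2}\id+Z\bigr)$ with $Z$ a sum of disjoint four-Majorana words whose expectation against $\rho$ reduces to a single correlator of the form $-2\binom{V}{4}\tau_4$), while the disjoint-overlap terms produce a $2s$-point correlator $\tau_{2s}$. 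Applying the same averaging argument to $\tau_{2s}$ closes the recursion, since $\binom{V}{2s}/\binom{V}{s}^2\cdot|\tau_{2s}|\leq C/V^s$ balances the direct $V^{-s}$ diagonal term. Counting words of support $s$ (at most $\binom{k-1}{s-1}(2^{2p})^s$ per fixed $j$ respecting the parity constraints), $\sum_S|\tr(\rho M_S)|^2$ forms a geometric series in $2^{2p}/V$ dominated by $s=2$: $\sum_S|\tr(\rho M_S)|^2\lesssim(k-j)\cdot 4^{2p}/V^2$. Hence $|\tr(\rho\tilde Y_j)|\lesssim\sqrt{k-j}\cdot 2^{2p}/V$, and summing over $j$ via $\sum_{t=0}^{k-1}\sqrt t\approx\tfrac{2}{3}(k-1)^{3/2}$ produces the claimed $\tfrac{2}{\sqrt 3}\cdot 2^{2p}(k-1)^{3/2}/V$ once the numeric constant is pinned down by the base case $V=3$ where $\|\bar M_2\|^2=1/3$.

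The principal obstacle I anticipate is the operator-norm recursion for $\bar M_S$: carefully tracking the sign cancellations among the $\binom{V}{s}^2$ products, closing the recursion with sharp constants, and handling sites that carry more than one Majorana (or general $s$) is combinatorially delicate. A secondary difficulty is arranging the final Cauchy--Schwarz reassembly so that the $(k-1)^{3/2}$ scaling (rather than a weaker $(k-1)^2$) comes out cleanly from $\sum_j\sqrt{k-j}$ while preserving the local dimension factor $2^{2p}$ without an extraneous $2^{pk}$.
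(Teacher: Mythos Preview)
Your opening moves---trace-norm duality and the telescope $Y-C^{[k]}(Y)=\sum_j\tilde Y_j$ with $\tilde Y_j$ odd on site $j$, even on $1,\dots,j-1$, and $\|\tilde Y_j\|\le 1$---coincide with the paper's. The divergence, and the genuine gap, is in how you bound $|\tr(\rho\,\tilde Y_j)|$.

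You expand $\tilde Y_j$ completely in the Majorana basis and aim for a recursive estimate $|\tr(\rho M_S)|^2\le C/V^s$ via averaging $M_S$ over order-preserving embeddings. Two problems arise. First, the cancellation of partial-overlap terms in $\bar M_S\bar M_S^\dagger$ that you attribute to ``anti-commutation of Majoranas on disjoint sites'' does not apply as stated: $M_S$ is globally \emph{even}, so translates of $M_S$ to disjoint site sets \emph{commute}. The cancellation you see for $s=2$ with a single odd Majorana per site works for a subtler reason (the odd factors on the non-shared sites anticommute and their Hermitian-conjugate signs match), and for general $s$ with mixed even/odd content on the sites it is not clear this survives; the disjoint-overlap contribution then feeds into a recursion on $\tau_{2s}$ that you do not close. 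Second, even granting your bound, the sum $\sum_S|\tr(\rho M_S)|^2$ runs over \emph{all} Majorana words odd on $j$, even on sites $<j$, even in total---this includes words carrying nontrivial even content on arbitrarily many sites in $[k]$, so the count is exponential in $k$, not the polynomial $\binom{k-1}{s-1}(2^{2p})^s$ you write; and for the even-content sites there is no anticommutation mechanism to gain extra $1/V$ factors.

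The paper sidesteps both issues by \emph{not} expanding in Majorana words. After the telescope it extracts only the site-$l$ content, writing $A_l=\sum_{\vec\alpha}m_l^{\vec\alpha}B_{l,\vec\alpha}$ (at most $2^{2p-1}$ terms, independent of $k$), with $B_{l,\vec\alpha}$ a black-box \emph{odd} operator satisfying $\|m_l^{\vec\alpha}B_{l,\vec\alpha}\|\le\|A\|$. It then builds a \emph{two-parameter} family of order-preserving permutations $\pi^{(l)}_{a,x}=\tau^l_a\,\pi^{(l)}_x$: the index $a$ moves site $l$ alone into a central block of size $|V^1_l|\sim V/2$, while $x$ moves the remaining $k-1$ sites rigidly into one of $n_k=\lfloor V/2(k-1)\rfloor$ disjoint slots flanking that block. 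After Lemma~\ref{lm:CS_bound} the cross-terms in the resulting double sum vanish \emph{exactly}, because the $m_a^{\vec\alpha}$ (odd, disjoint sites) anticommute among themselves and the $\pi^{(l)}_x(B)$ (odd, disjoint supports) anticommute among themselves, while each double sum is symmetric in $(a,b)$ and in $(x,y)$. This gives directly $|\tr(\rho\,m_l^{\vec\alpha}B_{l,\vec\alpha})|^2\le 1/(|V^1_l|\,n_k)$ with no recursion. Summing the $2^{2p-1}$ choices of $\vec\alpha$ and the $k-1$ values of $l$ produces the bound; the $(k-1)^{3/2}$ appears as $(k-1)\times\sqrt{k-1}$ (the $\sqrt{k-1}$ coming from $n_k\sim V/(k-1)$), not from your $\sum_j\sqrt{k-j}$.
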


\begin{proof}
 In order to prove the theorem, we first rewrite the one-norm distance of two states using expectation values via
 \begin{eqnarray}
  \|\tr_{[\syssize]{\backslash[k]}}(\rho) - \tr_{[\syssize]{\backslash[k]}}[C(\rho)]\|_1  &=& 
  \sup\limits_{\substack{A:\|A\|=1,\hc{A}=A\\\supp(A)\subset[k]}}|\tr(\rho A- C(\rho)A)|\nonumber\\
  &=& \sup\limits_{\substack{A:\|A\|=1,\hc{A}=A\\\supp(A)\subset[k]}}|\tr(\rho [A- C(A)])|.
 \end{eqnarray}
 For $k=1$ the bound is therefore trivially fulfilled as then $C(A)=A$ by the overall evenness of $A$, assume therefore $1<k<\syssize/2$ for the following.
 
 We bound the expectation value by decomposing a general observable $A$ into different contributions using the the maps $C^+$ and $C^-$ for different operators $P$.  
 Using the local parity operator $P_1$ we define the two operators $C_{P_1}^-(A) = A_1$ and  $C_{P_1}^+(A)$ which are both bounded in operator norm by the norm of $A$. 
 As discussed above, $A_1$ will contain all terms of $A$ which are odd on site 1, whereas in $C_{P_1}^+(A)$ all terms which are even on site 1 are collected.
 We continue by decomposing $C_{P_1}^+(A)$ into $C_{P_2}^-[C_{P_1}^+(A)] = A_2$ and $C_{P_2}^+[C_{P_1}^+(A)]$. The operator $A_2$ contains now all terms of $A$ which are even on site 1 but odd on site 2. We can iterate this process and define for $l\in[k-1]$ the operators
 \begin{equation}
  A_l = C_{P_l}^- \circ C_{P_{l-1}}^+ \circ  C_{P_{l-2}}^+ \circ \ldots \circ  C_{P_{1}}^+ (A)
 \end{equation}
 and $A_k = C_{P_k}^+ \circ \ldots \circ  C_{P_{1}}^+ (A)$ which fulfill
 \begin{align}
  A = \sum\limits_{l=1}^k A_l
 \end{align}
 and $\|A_l\|\leq \|A\|$ for all $l\in[k]$ by Lemma \ref{lm:ParityNormBound}.

 Next, we decompose the operators $A_l$ for $l\in[k-1]$. 
 Given $l\in[k-1]$, we define the two operators $C_{\maj_{l}^1}^-(A_l)$ and $C_{\maj_{l}^1}^+(A_l)$ (here it is important to note that $\maj_j^\alpha$ is an Hermitian operator with eigenvalues $\pm 1$). As each $A_l$ is overall even, the operator $C_{\maj_{l}^1}^-(A_l)$ contains all terms of $A_l$ which involve the operator $\maj_l^1$ and  $C_{\maj_{l}^1}^+(A_l)$ collects all terms without $\maj_l^1$. We can iterate this with all $\maj_l^\alpha$ operators and obtain a decomposition
 \begin{align}
  A_l = \sum\limits_{r=1}^{\numlocmodes}\sum\limits_{1\leq\alpha_1<\ldots<\alpha_{2r-1}\leq2\numlocmodes} \maj_l^{\alpha_1}\ldots \maj_l^{\alpha_{2r-1}} B_{l,(\alpha_j)_{j\in[2r-1]}}
 \end{align}
 for any $l\in[k-1]$ with 
  \begin{align}
 \|\maj_l^{\alpha_1}\ldots \maj_l^{\alpha_{2r-1}} B_{l,(\alpha_j)_{j\in[2r-1]}}\|\leq \|A\|.
  \end{align}
 The operators $B_{l,(\alpha_j)_{j\in[2r-1]}}$ are overall odd, even on the sites $1,\ldots,l-1$ and act trivially on site $l$.
 
 Next we introduce a set of permutations of the $\syssize$ sites in order to exploit the permutation invariance of the state. 
 For this we decompose for a given $l\in[k-1]$ the set $[k]$ into the left and right part $[l-1]$ and $[k]\backslash[l]$ and the site $l$. 
 The permutations are then supposed to permute the site $l$ on one of about $\syssize/2$ many sites in the middle of the system. 
 The left and right part are then permuted independently from the permutation of the site $l$ in a block to the left and to the right of the middle part on which $l$ is permuted, where the block structure of the left and right block is preserved (consecutive sites stay consecutive) and the position of the left and right block is correlated.
 To make this concrete, let $\tau_i^j\in S_\syssize$ denote the transposition of the sites $i$ and $j$ and define $n_k = \lfloor \syssize/2(k-1)\rfloor$. 
 Further, we introduce for $x\in[n_k]$ and $l\in[k-1]$ the abbreviations 
   \begin{eqnarray}
 b_x^l &:=& \syssize - (x-1)(k-l),\\
 c_x^l &:=& n_k(l-1) - (x-1)(l-1)
   \end{eqnarray}
  and the set 
    \begin{align}
  V_l^1 := \{j\in[\syssize]:n_k(l-1)< j\leq \syssize-n_k(k-l)\}.
    \end{align}
 We then define for any $l\in[k-1]$, $a\in V_l^1$ and $x\in[n_k]$ the permutations $\pi^{(l)}_{a,x}$ and $\pi^{(l)}_{x}$ by
 \begin{align}
  \pi^{(l)}_{a,x} := \tau^l_a\ \tau^{k}_{b_x^l} \tau^{k-1}_{b_x^l-1} \ldots \tau^{l+1}_{b_x^l-k+l+1}\ \tau^{l-1}_{c_x^l}\tau^{l-2}_{c_x^l-1} \ldots \tau^{1}_{c_x^l-l+2} = \tau^l_a \pi^{(l)}_x,
  \label{eq:permDef}
 \end{align}
 which are visualized in Fig.~\ref{fig:permutationIllustration}.
 \begin{figure}
 \includegraphics[width=0.7\textwidth]{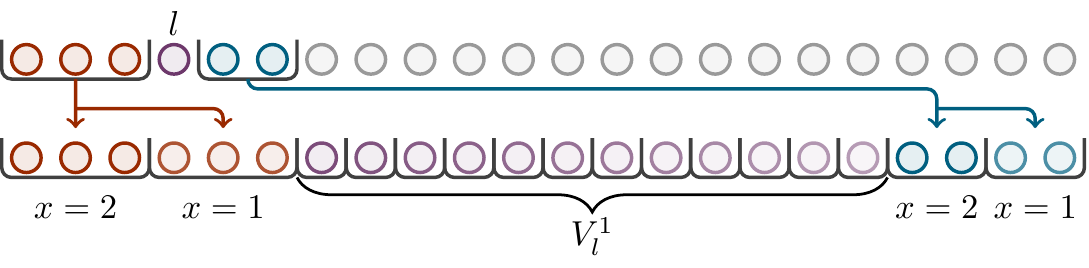}
 \caption{Illustration of the permutation constructed in Eq.~\eqref{eq:permDef} for $l=4$, $k=6$ and $\syssize=22$. The permutations $\tau^l_a \pi^{(l)}_x$ permutes site $l$ into the central part (highlighted in purple) and the left and right parts (red and blue) are permuted into the bins to the left and right of the central part. The position of the left an right part is correlated and fixed by the choice of $x$. The final position of $l$ in the central part is specified by $a$.}
 \label{fig:permutationIllustration}
\end{figure}
 By definition, the $n_k |V_l^1|$ many permutations $\pi^{(l)}_{a,x}$ never change the relative order of the sites $[k]$.
 We obtain, therefore, for a permutation invariant state $\rho$ for any $l\in[k-1]$ and operator $\maj_l^{\alpha_1}\ldots \maj_l^{\alpha_{2r-1}} B_{l,(\alpha_j)_{j\in[2r-1]}}$ in the above decomposition
 \begin{align}
  \tr(\rho \maj_l^{\alpha_1}\ldots \maj_l^{\alpha_{2r-1}} B_{l,(\alpha_j)_{j\in[2r-1]}}) = \tr\Biggl[\rho \frac{1}{|V_l^1| n_k}\sum\limits_{a\in V^1_l}\sum\limits_{x\in[n_k]} \pi^{(l)}_{a,x}\Bigl(\maj_l^{\alpha_1}\ldots \maj_l^{\alpha_{2r-1}} B_{l,(\alpha_j)_{j\in[2r-1]}}\Bigr) \Biggr].
 \end{align}
 To simplify the notation, we introduce the abbreviation $\maj^{\vec{\alpha}}_l := \maj_l^{\alpha_1}\ldots \maj_l^{\alpha_{2r-1}}$.
 Using Lemma \ref{lm:CS_bound}, we then obtain
 \begin{align}
  |\tr(&\rho \maj_l^{\vec{\alpha}} B_{l,(\alpha_j)_{j\in[2r-1]}})|^2\nonumber \\
  &\leq \tr\Biggl[\rho \frac{1}{|V_l^1|^2 n_k^2}\sum\limits_{a\in V^1_l}\sum\limits_{x\in [n_k]} \sum\limits_{b\in V^1_l}\sum\limits_{y\in[n_k]} \pi^{(l)}_{a,x}\Bigl(\maj_l^{\vec{\alpha}} B_{l,(\alpha_j)_{j\in[2r-1]}}\Bigr)\lhc{\pi^{(l)}_{b,y}\Bigl(\maj_l^{\vec{\alpha}} B_{l,(\alpha_j)_{j\in[2r-1]}}\Bigr)} \Biggr] \nonumber\\
  &=\tr\Biggl[\rho \frac{1}{|V_l^1|^2 n_k^2}\sum\limits_{a,b\in V^1_l}\sum\limits_{x,y\in [n_k]}  \maj_a^{\vec{\alpha}} \pi^{(l)}_x\Bigl(B_{l,(\alpha_j)_{j\in[2r-1]}}\Bigr)\lhc{\pi^{(l)}_{y}\Bigl( B_{l,(\alpha_j)_{j\in[2r-1]}}\Bigr)}\hc{\maj_b^{\vec{\alpha}}} \Biggr].
 \end{align}
 The sum over $x$ and $y$ is intrinsically symmetric in $x$ and $y$, whereas the operators $\pi^{(l)}_x(B_{l,(\alpha_j)_{j\in[2r-1]}})$ and $\pi^{(l)}_y(B_{l,(\alpha_j)_{j\in[2r-1]}})$ for $x\neq y$ anti-commute due to the overall oddness of $B_{l,(\alpha_j)_{j\in[2r-1]}}$. 
 Therefore all terms with $x\neq y$ vanish in the sum. 
 The same arguments for the operators $\maj^{\vec{\alpha}}_a$ and $\maj^{\vec{\alpha}}_b$ yield that only terms with $a=b$ and $x=y$ contribute to the sum and we obtain
 \begin{align}
  |\tr(\rho \maj_l^{\vec{\alpha}} B_{l,(\alpha_j)_{j\in[2r-1]}})|^2 &\leq \tr\Biggl[\rho \frac{1}{|V_l^1|^2 n_k^2}\sum\limits_{a\in V^1_l}\sum\limits_{x\in[n_k]}\pi^{(l)}_{a,x}\Bigl(\maj_l^{\vec{\alpha}} B_{l,(\alpha_j)_{j\in[2r-1]}}\lhc{\Bigl(\maj_l^{\vec{\alpha}} B_{l,(\alpha_j)_{j\in[2r-1]}}\Bigr)}\Bigr) \Biggr].
 \end{align}
 As $\rho$ is permutation invariant we obtain
 \begin{eqnarray}
  |\tr(\rho \maj_l^{\vec{\alpha}} B_{l,(\alpha_j)_{j\in[2r-1]}})|^2 
  &\leq & \frac{|V_l^1| n_k}{|V_l^1|^2 n_k^2}\tr\Biggl[\rho \Big(\maj_l^{\vec{\alpha}} B_{l,(\alpha_j)_{j\in[2r-1]}}\lhc{\Bigl(\maj_l^{\vec{\alpha}} B_{l,(\alpha_j)_{j\in[2r-1]}}\Bigr)} \Bigr)\Biggr] \\
  & \leq & \frac{\|\maj_l^{\vec{\alpha}} B_{l,(\alpha_j)_{j\in[2r-1]}}\|^2}{|V_l^1|n_k}
  \leq \frac{\|A\|^2}{|V_l^1|n_k}.\nonumber
 \end{eqnarray}
 Using the assumption $\|A\|=1$ by assumption yields
 \begin{align}
  |\tr(\rho \maj_l^{\vec{\alpha}} B_{l,(\alpha_j)_{j\in[2r-1]}})| \leq \left({\frac{1}{|V_l^1| n_k}}\right)^{1/2}.
 \end{align}
 By construction, $C(\rho)$ is even on all sites, i.e., for $l\in[k-1]$ $\tr(C(\rho)A_l)=0$ and further $\tr(\rho A_k ) = \tr (C(\rho)A_k)$. We then obtain from the above decomposition of $A$
 \begin{align}
  |\tr(\rho A) - \tr(C(\rho) A)| &\leq \sum\limits_{l=1}^k |\tr(\rho A_l) - \tr(C(\rho) A_l)| = \sum\limits_{l=1}^{k-1} |\tr(\rho A_l)|\\
  &\leq \sum\limits_{l=1}^{k-1}\sum\limits_{r=1}^{\numlocmodes}\sum\limits_{1\leq\alpha_1<\ldots<\alpha_r\leq2\numlocmodes} |\tr(\rho \maj_l^{\vec{\alpha}} B_{l,(\alpha_j)_{j\in[2r-1]}})| \leq \frac{2^{2\numlocmodes}(k-1)}{2\sqrt{|V_l^1| n_k}},
 \end{align}
as there are $2^{2\numlocmodes}/2$ different $\maj^{\vec{\alpha}}_l$ involved per site (due to the oddness constraint by construction).
Using the fact that 
\begin{equation}
 \max\limits_{w\in[n]}\frac{\frac{n}{w}-\lfloor\frac{n}{w}\rfloor}{\frac{n}{w}} \leq \frac{1}{2},
\end{equation}
we can simplify the bound to
\begin{equation}
  |\tr(\rho A) - \tr(C(\rho) A)| \leq \frac{2}{\sqrt{3}}\frac{2^{2\numlocmodes}\sqrt{k-1}^3}{\syssize}.
\end{equation}
If we assume $\syssize \geq 6$ the bound yields a value larger $2$ for all $k\geq \syssize/2$ such that the bound applies to all $k$.
\end{proof}

\subsection{A fermionic de Finetti theorem}

The state $C(\rho)$ is locally even on all sites and therefore fully permutation invariant, i.e., $\tr(C(\rho)A) = \tr(C(\rho)\pi(A))$ for all $\pi\in S_\syssize$ by Definition \ref{def:PermuInvFermStates}. 
By virtue of the Jordan-Wigner transformation, we can then map $C(\rho)$ to a permutation invariant state on $\syssize$ $2^\numlocmodes$ dimensional spins which fulfill a de Finetti theorem. 
In order to allow for a more compact notation, we define for a state $\xi\in\states{\mathcal{F}_\numlocmodes}$ and $k\in\N$ the state $\xi^{\otimes k}\in\states{\mathcal{F}_{k\numlocmodes}}$ as the $k$-fold copy of $\xi$. 
The individual copies are hereby completely uncorrelated meaning that $\tr(\maj_1^{\vec{\alpha}^{(1)}}\dots\maj_k^{\vec{\alpha}^{(k)}}\xi^{\otimes k}) = \tr(\maj_1^{\vec{\alpha}^{(1)}}\xi)\cdot\ldots\cdot\tr(\maj_1^{\vec{\alpha}^{(k)}}\xi)$ for any sets of indices $\vec{\alpha}^{(l)}\in[\numlocmodes]^{\times r_l}$, with $r_l\leq 2\numlocmodes$ for all $l=1,\dots,k$.
Note that the notation is motivated by the fact under the Jordan Wigner transformation the state $\xi^{\otimes k}$ is indeed the standard tensor product of the state $\xi$ in the proper sense if the sites are ordered appropriately. We extend this notation in the obvious sense to products of different states, i.e., for $\rho_1\in\states{\mathcal{F}_{\numlocmodes_1}}$ and $\rho_2\in\states{\mathcal{F}_{\numlocmodes_2}}$ then $\rho_1\otimes\rho_2\in\states{\mathcal{F}_{\numlocmodes_1+\numlocmodes_2}}$ denotes the state on the joint system. 
With this, we arrive at the
following main statement:
\begin{thm}[A fermionic de Finetti theorem]
\label{thm:deFinetti}
 Let $\rho$ be a permutation invariant fermionic state on a system of $\syssize\geq 6$ sites with $\numlocmodes$ modes per site. Then there exist for $r<\infty$ and $l\in[r]$ states $\xi_l\in\states{\mathcal{F}_\numlocmodes}$ with $\mathcal{F}_\numlocmodes$ being the fermionic Fock space of $\numlocmodes$ modes and $a_l\in[0,1]$ with $\sum_l a_l = 1$ such that
 \begin{align}
  \|\tr_{[\syssize]{\backslash[k]}}(\rho) - \sum\limits_{l=1}^r a_l \xi_l^{\otimes k}\|_1 \leq \frac{2}{\sqrt{3}}\frac{2^{2\numlocmodes}\sqrt{k-1}^3}{\syssize}+ 2\frac{2^{2\numlocmodes} k}{\syssize},
 \end{align}
where $\xi_l^{\otimes k}$ denotes the $k$-fold copy of $\xi_l$ on $\mathcal{F}_{k\numlocmodes}$.
\end{thm}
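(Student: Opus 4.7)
The plan is to combine Lemma \ref{lm:PermInvSuppression} with the standard finite quantum de Finetti theorem for distinguishable qudits (e.g., the König--Renner bound from Ref.\ \cite{Koenig2005}) via the Jordan--Wigner transformation. The two error terms in the statement match this decomposition: the first summand $\frac{2}{\sqrt{3}}\frac{2^{2\numlocmodes}\sqrt{k-1}^3}{\syssize}$ is precisely the bound of Lemma \ref{lm:PermInvSuppression}, while the second summand $2 \cdot 2^{2\numlocmodes} k / \syssize$ has exactly the form of the König--Renner bound $2 d^2 k / \syssize$ applied to qudits of local dimension $d = 2^{\numlocmodes}$.

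First, I would apply Lemma \ref{lm:PermInvSuppression} to bound
\begin{equation}
\|\tr_{[\syssize]\backslash[k]}(\rho) - \tr_{[\syssize]\backslash[k]}(C(\rho))\|_1 \leq \frac{2}{\sqrt{3}}\frac{2^{2\numlocmodes}\sqrt{k-1}^3}{\syssize},
\end{equation}
so it remains to approximate $\tr_{[\syssize]\backslash[k]}(C(\rho))$ by a convex mixture of tensor powers. The key observation is that $C(\rho)$ is by construction locally even on every site, so its expectation values are fully determined by Majorana monomials $\maj_{j_1}^{\alpha_1}\dots \maj_{j_r}^{\alpha_r}$ with an even number of Majorana operators on each site. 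For such words, Condition (2) in Definition \ref{def:PermuInvFermStates} and the fact that $\rho$ satisfies it together imply that $C(\rho)$ is \emph{fully} permutation invariant (not just forward-invariant).

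Next, I would invoke the Jordan--Wigner transformation to identify the locally even subalgebra with the full operator algebra on $\syssize$ qudits of dimension $d = 2^{\numlocmodes}$. Because the Jordan--Wigner strings become trivial when restricted to even-on-each-site operators, the locally even fermionic algebra is $\ast$-isomorphic to $\bigotimes_{j=1}^{\syssize} \mathcal{B}(\C^{d})$ and the isomorphism intertwines site-exchanges with the qudit swap $\mathrm{SWAP}$ action. Under this isomorphism, $C(\rho)$ is mapped to a permutation-symmetric qudit state $\tilde{\rho}\in\mathcal{D}((\C^{d})^{\otimes \syssize})$, and $\tr_{[\syssize]\backslash[k]}(C(\rho))$ corresponds to $\tr_{[\syssize]\backslash[k]}(\tilde{\rho})$. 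Crucially the map is trace-norm isometric on locally even operators, so distances are preserved.

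I would then apply the finite quantum de Finetti theorem of König--Renner \cite{Koenig2005} to $\tilde{\rho}$: there exist qudit states $\tilde{\xi}_1,\dots,\tilde{\xi}_r\in\mathcal{D}(\C^{d})$ and probabilities $a_l\geq 0$ with $\sum_l a_l = 1$ such that
\begin{equation}
\Bigl\|\tr_{[\syssize]\backslash[k]}(\tilde{\rho}) - \sum_{l=1}^r a_l \tilde{\xi}_l^{\otimes k}\Bigr\|_1 \leq \frac{2\, d^2 k}{\syssize} = \frac{2\cdot 2^{2\numlocmodes} k}{\syssize}.
\end{equation}
Pulling each $\tilde{\xi}_l$ back through Jordan--Wigner produces a locally even fermionic single-site state $\xi_l\in\states{\mathcal{F}_{\numlocmodes}}$, and $\tilde{\xi}_l^{\otimes k}$ corresponds to $\xi_l^{\otimes k}$ in the sense defined just before the theorem (the all-site-even sector of the fermionic tensor product is identified with the ordinary qudit tensor product, with no sign ambiguity since only even-on-each-site moments appear). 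A final triangle inequality combining the two bounds yields the claimed estimate.

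The only genuinely delicate step is the middle one: verifying that the Jordan--Wigner identification of the locally even fermionic algebra with the qudit tensor algebra (i) preserves trace norm, (ii) turns full permutation invariance of the relevant Majorana moments into ordinary $S_{\syssize}$-invariance of $\tilde{\rho}$, and (iii) sends qudit product states back to fermionic tensor powers $\xi_l^{\otimes k}$ as defined. Once that bookkeeping is in place, the rest is a direct application of Lemma \ref{lm:PermInvSuppression} and the standard finite de Finetti theorem.
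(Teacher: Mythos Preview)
Your overall strategy matches the paper's and is correct in outline, but there is one genuine gap and one technical misstatement.

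The gap: the qudit states $\tilde{\xi}_l$ produced by the standard de Finetti theorem are arbitrary elements of $\mathcal{D}(\C^{2^{\numlocmodes}})$; nothing forces them to be even under the local parity $Z^{\otimes \numlocmodes}$. Pulling them back through Jordan--Wigner therefore does \emph{not} automatically yield states in $\states{\mathcal{F}_{\numlocmodes}}$ (which by the paper's convention must respect parity superselection), contrary to your assertion. The paper closes this gap explicitly: it sets $\xi_l := C^{+}_{Z^{\otimes \numlocmodes}}(\tilde{\xi}_l)$, observes that the $k$-site even channel $C^{(k)} = C^{+}_{(Z^{\otimes \numlocmodes})_k}\circ\dots\circ C^{+}_{(Z^{\otimes \numlocmodes})_1}$ fixes $\tr_{[\syssize]\backslash[k]}(C(\rho))$ and sends $\tilde{\xi}_l^{\otimes k}$ to $\xi_l^{\otimes k}$, and then uses contractivity of $C^{(k)}$ in trace norm to conclude that the de Finetti bound is not worsened. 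This step is short but essential and is missing from your proposal.

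The misstatement: the locally even fermionic algebra on one site has dimension $2^{2\numlocmodes-1}$, so it is a \emph{proper} subalgebra of $\mathcal{B}(\C^{2^{\numlocmodes}})$, not $\ast$-isomorphic to it as you claim. What is actually true (and sufficient for the argument) is that Jordan--Wigner sends locally even operators to string-free, genuinely local qudit operators, and that $C(\rho)$ commutes with each local parity. Hence, as a qudit state, $\tilde{\rho}$ only ``sees'' the locally even part of any observable, on which the fermionic and qudit permutation actions coincide; this is what yields full $S_{\syssize}$-invariance of $\tilde{\rho}$. Your intended conclusion is correct, but the justification you give for it is not.
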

\begin{proof}
 By Lemma \ref{lm:PermInvSuppression}, we can bound 
 \begin{equation}
  \|\tr_{[\syssize]{\backslash[k]}}(\rho) - \tr_{[\syssize]{\backslash[k]}}[C(\rho)]\|_1 \leq \frac{2}{\sqrt{3}}\frac{2^{2\numlocmodes}\sqrt{k-1}^3}{\syssize}  ,
 \end{equation}
 where $C(\rho)$ is locally even. 
 Note that $C(\rho)$ is fully permutation invariant as 
  \begin{equation}
 \tr(C(\rho)\pi(A)) = \tr(\rho C(\pi(A))) = \tr(\rho \pi(C(A))) = \tr(\rho A) 
 \end{equation}
for $\rho$ being permutationally invariant according to Definition \ref{def:PermuInvFermStates}.
 By virtue of the Jordan-Wigner transformation, $C(\rho)$ can therefore be viewed as a permutation invariant state on $\otimes^{\syssize}\C^{2^\numlocmodes}$. 
 From the de Finetti theorem on finite spin systems, we then obtain that there exist states $\chi_l\in\mathcal{B}(\C^{2^\numlocmodes})$ and weights $a_l\in[0,1]$ for $l=1,\dots,r>\infty$ such that
 \begin{equation}
  \|\tr_{[\syssize]{\backslash[k]}}[C(\rho)]-\sum\limits_{l=1}^r a_l \chi_j^{\otimes k}\|_1 \leq 2\frac{2^\numlocmodes k}{\syssize}.
 \end{equation}
 Defining the states $\xi_l = C^{+}_{Z^{\otimes \numlocmodes}}(\chi_l)$ we find that for $C^{(k)}= C^{+}_{(Z^{\otimes \numlocmodes})_k}\circ\dots\circ C^{+}_{(Z^{\otimes \numlocmodes})_1}$, under the Jordan-Wigner identification, 
  \begin{equation}
 C^{(k)}(\tr_{[\syssize]{\backslash[k]}}[C(\rho)]) = \tr_{[\syssize]{\backslash[k]}}[C(\rho)]
   \end{equation}
  and 
  \begin{equation}
    C^{(k)}(\chi_l^{\otimes k}) = \xi_l^{\otimes k}.
  \end{equation}
 Note that by construction the $\xi_l$ are even operators and therefore states on the Fock space $\mathcal{F}_{\numlocmodes}$.
 By the contractiveness of a channel we find 
 \begin{equation}
  \|C^{(k)}\Bigl(\tr_{[\syssize]{\backslash[k]}}[C(\rho)]-\sum\limits_{l=1}^r a_l \chi_j^{\otimes k}\Bigr)\|_1 \leq \|\tr_{[\syssize]{\backslash[k]}}[C(\rho)]-\sum\limits_{l=1}^r a_l \chi_j^{\otimes k}\|_1,
 \end{equation}
 such that 
 \begin{equation}
  \|\tr_{[\syssize]{\backslash[k]}}(\rho) - \sum\limits_{l=1}^r a_l \xi^{\otimes k}_l\|_1 \leq \frac{2}{\sqrt{3}}\frac{2^{2\numlocmodes}\sqrt{k-1}^3}{\syssize} + 2\frac{2^{2\numlocmodes}k}{\syssize},
 \end{equation}
 which proves the main statement of the theorem.
\end{proof}

\section{Structure of fermionic mode product states and ground state approximation}

The states appearing in the above de Finetti theorem, fermionic mode product states, may be somewhat
uncommon for fermionic systems on the first sight.
We therefore would like to elaborate on their structure in the following as they can be connected to more natural fermionic states in certain limiting cases. Further, we wish to highlight two applications of the above theorem. 

\subsection{Implications for mean field approximations}
First, we explain how it helps to bound mean field approximations to fermionic systems in special cases. Second, we explain how it leads to generalizations of established results for fermionic systems using the example of fermionic central limit theorems.
For this, given a fermionic permutation invariant state $\rho$ with with $r$, $a_l$ and $\xi_l$ being the coefficients and states corresponding to $\rho$ according to Theorem ~\ref{thm:deFinetti} we define the abbreviation $\sigma_k = \sum_{l=1}^r a_l \xi^{\otimes k}_l$, omitting any reference to $\rho$ as it will be clear which $\rho$ is considered from the context.

In the case of a single mode per site, i.e., $\numlocmodes=1$, $\xi_l\in\states{\mathcal{F}_1}$ are of the form $\xi_l = \alpha_l \ketbra{0}{0} + (1-\alpha_l)\ketbra{1}{1}$ with $\alpha_l\in[0,1]$. We obtain then that
\begin{equation}
 \sigma_k = \sum\limits_{i_1,\ldots,i_k=0}^1 b_{i_1,\ldots,i_k}\ketbra{i_1,\ldots,i_k}{i_1,\ldots,i_k},
\end{equation}
with $b_{i_1,\ldots,i_k}\geq 0$ and $\sum b_{i_1,\ldots,i_k} = 1$, meaning that $\sigma_k$ is diagonal and therefore the convex combination of Fock basis states, i.e., pure Gaussian states.
The same holds for all $\numlocmodes\leq 3$ where it can be also shown that every pure state $\xi\in\states{\mathcal{F}_\numlocmodes}$ is a Gaussian state \cite{Melo2013}.
This has the obvious yet important implication that the full $\sigma_k$ is then
again a convex combination of pure Gaussian states 
as well.

We can relate this to approximating a permutation invariant ground state of a given physical model.
Let $\mathcal{S}_\syssize$ denote a collection of subsets of size $k$ of $[V]$.
Consider a fermionic system of size $\syssize$ with a permutation invariant 
ground state $\rho_{\mathrm{GS}}$ and Hamiltonian 
\begin{equation}
 H = \frac{1}{|\mathcal{S}_\syssize|}\sum\limits_{S\in \mathcal{S}_\syssize} H_S,
\end{equation}
where we assume the $H_S$ terms to be normalized $|| H_S || \leq 1$ and to be supported on the modes of the sites $S\subset [V]$ only. Under the assumption of a permutation invariant ground state, we  then obtain
\begin{eqnarray}
 \frac{2^{2\numlocmodes} k^{3/2}}{\syssize} &\geq &\frac{1}{|\mathcal{S}_\syssize|}\sum\limits_{S\subset \mathcal{S}_\syssize} \Bigl|  \tr \Bigl(H_S \Bigl[\sum\limits_{l=1}^r a_l \xi^{\otimes k}_l - \tr_{[V-k]}(\rho_{\mathrm{GS}})\Bigr]\Bigr)\Bigr| \nonumber\\
 &\geq&  \tr \Bigl(H \Bigl[\sum\limits_{l=1}^r a_l \xi^{\otimes \syssize}_l - \rho_{\mathrm{GS}}\Bigr]\Bigr) \nonumber\\
 &\geq& \min\limits_{\xi\in\states{\mathcal{F}_{\numlocmodes}}}\tr(H\xi^{\otimes \syssize})-E_\mathrm{GS}\label{eq:GSApprox},
\end{eqnarray}
where $a_l$ and $\xi_l$ are the coefficients and states from Theorem \ref{thm:deFinetti} corresponding to $\rho_\mathrm{GS}$, and the last step follows from the linearity of the energy expectation value.

For $\numlocmodes \leq 3$, in particular for the important case $p=2$ reflecting fermions with a spin, 
we therefore directly obtain by convexity that Theorem \ref{thm:deFinetti} allows us to bound for the above defined models the best energy obtained from a pure Gaussian ground state approximation. In other words, these models are instances in which Theorem \ref{thm:deFinetti} allows to bound the error made by using a Hartree-Fock approximation, and hence giving a performance certificate.

However, let us also note that the assumptions made on the system are rather strict. 
The normalization of the Hamiltonian and more importantly the property of having a permutation invariant ground state in the 
first place are very restrictive. The above argumentation therefore does not yield bounds on common systems encountered most 
naturally but serves as an illustration for how mode de Finetti theorems are in principle capable of 
providing insights into particle product state approximation like the Hartree-Fock method.

\subsection{A central limit theorem for correlated fermionic states}
Next to understanding the structure of $\sigma_k$ for low $\numlocmodes$ the two limiting cases for the size of the subsystem $k$ can be understood.
If we consider on-site observables only, i.e.~$k=1$, by Theorem \ref{thm:deFinetti}, $\sigma_1$ agrees up to an error decreasing with the system size with the single site reduction of the initial states $\rho$ where $\tr_{[\syssize]{\backslash\{1\}}}(\rho)$ can be any state of $\states{\mathcal{F}_\numlocmodes}$, for instance also far away from any Gaussian state. In short, $\sigma_1$ can obviously be any state in $\states{\mathcal{F}_\numlocmodes}$.
However, in the case of a large subsystem $k\gg 1$, the products $\xi_l^{\otimes k}$ acquire an additional structure which is captured by a fermionic central limit theorem. 
In Appendix \ref{app:HudsonExtended}, we show that all Fourier moments of $\xi_l^{\otimes k}$ are the moments of a Gaussian state up to an error scaling as $k^{-1}$.
By this, $\sigma_k$ can be thought of as a convex combination of Gaussian states in the limit of large $k$ for observables which are smeared over the whole subsystem of size $k$. 

To be precise, we introduce the Fourier modes
\begin{equation}
 a_\momt^\alpha = \frac{1}{\sqrt{\syssize}}\sum\limits_{j=1}^\syssize \eexp{2\pi \iu \frac{j\momt}{\syssize}} f_j^\alpha,
\end{equation}
with $\momt=-\lfloor (\syssize-1)/2\rfloor,\ldots,\lfloor \syssize/2\rfloor$. 
Then the extension of Hudson's central limit theorem \cite{Hudson1973} presented in the appendix in Lemma \ref{lm:Hudson20} implies now that all cumulants of order $w>2$ are suppressed in the number of copies $\syssize$ by 
\begin{equation}
 |K_\cuo^{\rho^{\otimes \syssize}}(a_{\momt_1}^{c_1,\alpha_1},\ldots,a_{\momt_\cuo}^{c_\cuo,\alpha_\cuo})|\leq \frac{1}{\sqrt{\syssize}^{2-\cuo}}|K_\cuo^{\rho}(f_{1}^{c_1,\alpha_1},\ldots,f_{1}^{c_\cuo,\alpha_\cuo})|.
\end{equation}
In addition we see that the second cumulants decouple into contributions from the modes $\momt=0$, if $\syssize$ is even $\momt=\syssize/2$ and of $\momt$ and $-\momt$ and all these contributions are closely related to the second cumulants of the copied state $\rho$ as 
\begin{equation}
 |K_2^{\rho^{\otimes \syssize}}(a_{\momt_1}^{c_1,\alpha_1},a_{\momt_2}^{c_2,\alpha_2})| =  K_2^{\rho}(f_{1}^{c_1,\alpha_1},\ldots,f_{1}^{c_2,\alpha_2})\delta_{(c_1\momt_1+c_2\momt_2) \mathrm{\,mod\,}\syssize ,0}.
\end{equation}
We therefore obtain that the cumulants of $\rho^{\otimes \syssize}$ are approximated by the cumulants of $\rho_0\otimes\rho_{\syssize/2}\otimes_{\momt=1}^{\lfloor (\syssize-1)/2\rfloor}\rho_{\momt,-\momt}$ where the individual states are Gaussian and $\rho_0=\rho_{\syssize/2}$ and $\rho_{\momt,-\momt}=\rho_{\momt^\prime,-\momt^\prime}$ for all admissible $\momt,\momt^\prime$.
On the one hand, this observation reveals the structure of i.i.d.\  mode product states when probed on large subsystems. On the other hand, our mode de Finetti theorem in combination with the extended Hudson central limit theorem in Lemma~\ref{lm:Hudson20} yields immediately the corollary:

\begin{cor}[A central limit theorem for correlated states]
 Let $\rho$ be a permutation invariant state according to Definition \ref{def:PermuInvFermStates} on $\syssize\geq 6$ sites then for any $k \leq V$ the Fourier moments of the reductions $\tr_{{\backslash[k]}}(\rho)$ converge to the one of a convex combination of Gaussian states with an error decreasing as $k^{-1}+k^{3/2}/\syssize$.
\end{cor}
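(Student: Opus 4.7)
I would prove this corollary by chaining the paper's two main tools via a single triangle inequality: Theorem~\ref{thm:deFinetti} first turns the reduction $\tr_{[\syssize]\backslash[k]}(\rho)$ into a convex combination of i.i.d.\ mode product states $\xi_l^{\otimes k}$, and the extended Hudson CLT (Lemma~\ref{lm:Hudson20}) then replaces each $\xi_l^{\otimes k}$ by a bona fide Gaussian state $\gamma_l$ at the level of Fourier moments. Both error bounds being explicit, the corollary should reduce to combining them cleanly.

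Concretely, first I would apply Theorem~\ref{thm:deFinetti} to obtain weights $a_l\in[0,1]$ and states $\xi_l\in\states{\mathcal{F}_\numlocmodes}$ such that, setting $\sigma_k=\sum_{l=1}^r a_l\xi_l^{\otimes k}$,
\begin{equation}
\|\tr_{[\syssize]\backslash[k]}(\rho)-\sigma_k\|_1 \leq \frac{2}{\sqrt{3}}\frac{2^{2\numlocmodes}\sqrt{k-1}^{3}}{\syssize}+2\frac{2^{2\numlocmodes}k}{\syssize}.
\end{equation}
Since the Fourier operators $a_\momt^\alpha=k^{-1/2}\sum_{j=1}^k\eexp{2\pi\iu j\momt/k}f_j^\alpha$ on the $k$-site reduction still satisfy canonical anti-commutation relations, any fixed-order Fourier moment observable is a polynomial in operators of unit norm and therefore has operator norm bounded by a $\cuo$-dependent constant independent of $k$ and $\syssize$. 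By trace-norm/operator-norm duality, the above bound transfers directly into an $O(k^{3/2}/\syssize)$ error on any bounded-order Fourier moment evaluated on $\tr_{[\syssize]\backslash[k]}(\rho)$ versus $\sigma_k$.

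Second, I would apply Lemma~\ref{lm:Hudson20} to each block $\xi_l^{\otimes k}$ separately: the second Fourier cumulants coincide exactly with those of a Gaussian state $\gamma_l$ (built from the two-point function of $\xi_l$ and decomposing into the self-paired modes $\momt=0$, $\momt=\syssize/2$ and the pairs $(\momt,-\momt)$), while all cumulants of order $\cuo>2$ are suppressed by $k^{(2-\cuo)/2}\leq k^{-1}$. Expanding a fixed-order moment via
\begin{equation}
\tr(\xi_l^{\otimes k}\,a_{\momt_1}^{c_1,\alpha_1}\ldots a_{\momt_\cuo}^{c_\cuo,\alpha_\cuo})=\sum_{P\in\Parts_{[\cuo]}}\sigma_P\prod_{p\in P}K_{|p|}^{\xi_l^{\otimes k}}\bigl((a_{\momt_i}^{c_i,\alpha_i})_{i\in p}\bigr),
\end{equation}
the Gaussian moment of $\gamma_l$ is precisely the restriction of this sum to partitions into pairs; every remaining partition contains at least one block of size $\geq 4$ and so contributes at most $O(k^{-1})$. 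Taking the convex combination $\omega_k:=\sum_l a_l\gamma_l$, which is manifestly a convex combination of Gaussian states, the triangle inequality yields the asserted $O(k^{-1}+k^{3/2}/\syssize)$ error on every bounded-order Fourier moment. The main obstacle I anticipate is the cumulant-to-moment bookkeeping in this last step, but since $|\Parts_{[\cuo]}|$ is a finite $\cuo$-dependent constant and every non-pair partition already carries a factor $O(k^{-1})$ from Lemma~\ref{lm:Hudson20}, this reduces to a routine combinatorial verification rather than a genuinely new estimate.
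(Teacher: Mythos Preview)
Your proposal is correct and follows precisely the route the paper indicates: the paper does not give a standalone proof of this corollary but states that it ``yields immediately'' from combining the de Finetti bound of Theorem~\ref{thm:deFinetti} with the cumulant suppression of Lemma~\ref{lm:Hudson20}, which is exactly the two-step triangle-inequality argument you spell out. Your handling of the two error contributions---the $O(k^{3/2}/\syssize)$ trace-norm error transferred to moments by duality, and the $O(k^{-1})$ error from the higher cumulants via the moment--cumulant expansion---is the intended mechanism, and the cumulant bookkeeping you flag as the only remaining work is indeed routine since $|\Parts_{[\cuo]}|$ is a $\cuo$-dependent constant. (One cosmetic slip: the self-paired Fourier mode should be $\momt=k/2$, not $\momt=\syssize/2$, since you are working on the $k$-site reduction.)
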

This exemplifies how insights about i.i.d.\  product states immediately extend to the more general structure of permutation invariance and we obtain a fermionic central limit type theorem for states with long range correlations. In mindset,
this is reminiscient of the dynamical central limit theorems allowing for initial correlations as presented in Ref.\ 
\cite{Gluza2016} and building upon the bosonic Ref.\ \cite{CramerCLT}.

\subsection{Comments on one-particle reduced density operators}

In this final comment, we hint at a link to consequences of permutation invariance of fermionic states
to spectral properties of \emph{one-particle
reduced density operators} (1-RDM). It is known that spectra of 1-RDM arising from 
pure fermionic states give rise to a \emph{convex polytope}
\cite{Altunbulak2008,Schilling2013}, giving rise to \emph{generalized Pauli constraints}. 
General mixed fermionic states do not have to fulfil such constraints \cite{RevModPhys.35.668}. 
However,
for permutation invariant fermionic states, again new constraints emerge to the 1-RDM.
The object in the focus of attention here is the 1-RDM, for $K$ modes
defined as the \emph{correlation matrix} $\id\geq \Gamma\geq 0$ with entries
\begin{equation}
	\Gamma_{j,k} = \langle f_j^\dag f_k\rangle,
\end{equation}
for $\numlocmodes = 1$ and $j,k=1,\dots, K$.
For fixed particle number $N$, one has $\tr (\Gamma)=N$.
In the symmetric setting considered here, one
finds $\langle f_j^\dag f_j\rangle = a$ and $\langle f_j^\dag f_k\rangle=b$ for $j>k$
and  $\langle f_j^\dag f_k\rangle=b^\ast$ for $j<k$, with $a = {N}/{\nummodes}$ and $|b|\leq {8}/(3^{1/2}{\nummodes})$,
by our theorem.
Further, one can show for $b=|b|e^{\iu\phi}$ with $\phi\in \R$ that the 1-RDM has eigenvalues
\begin{equation}
 \lambda_k = \begin{cases}\frac{N}{\nummodes} + |b| \frac{\cos\left(\frac{2\pi}{\nummodes}k+\frac{(\nummodes-2)}{\nummodes}\phi\right)-\cos\left(\phi\right)}{1-\cos\left(\frac{2\pi}{\nummodes}k-\frac{2}{\nummodes}\phi\right)}&\mathrm{if}\ b\notin\R\\
              \frac{N}{\nummodes}-b+b \nummodes\delta_{k,0} &\mathrm{if}\ b\in\R
             \end{cases}\,\,\,\,,
\end{equation}
for $k=0,\dots,\nummodes-1$. 
 What is more, $\id\geq \Gamma\geq 0$ implies further constraints to $b$.
Hence, we find that from permutation invariance and the fermionic character alone one
can identify constraints, beyond the \emph{standard Pauli constraints} that $\lambda_k\in[0,1]$ for all $k=1,\dots, \nummodes.$
This statement only takes the case $\numlocmodes=1$ into account. For $\numlocmodes>1$, a richer structures emerges, 
as here the correlation matrix $\Gamma$ takes the form
\begin{equation}
\Gamma = \begin{pmatrix}
           A & B  & \dots & B\\
           B^\dag & A & \ddots  & \vdots \\
           \vdots&\ddots&\ddots&B&\\
           B^\dag&\dots&B^\dag&A
          \end{pmatrix},
\end{equation}
with $A$ being hermitian with its trace fixed by the particle number, and the entries of $B$ again being supressed
in the system size.

\section{Outlook}
In this work, we have presented a fermionic mode de Finetti theorem for finite sized systems,
stated precisely in Theorem \ref{thm:deFinetti}. We have shown that we can derive this theorem without assuming a full permutation invariance of the state, which in combination with the canonical anti-commutation relations would lead to forcing specific correlators to vanish and imposes therefore additional constraints on the system. We instead provide an operational definition of permutation invariance,
restricting ourself to a more natural setting for fermionic states which does not interfere with the intrinsic anti-commutation of such systems. 
Interestingly, by virtue of the Jordan Wigner transformation this of course immediately also provides an extension to de Finetti theorems of distinguishable particles which we have not discussed so far namely  in cases were the state of the system can be mapped to a permutation invariant fermionic state. Investigating the potential of such a generalization will be subject of future research.
Further, we 
have discussed the structure of the resulting mode product states and connected them in different limiting cases to Gaussian states. In doing so, we in addition illustrated how it allows us to generalize established results for i.i.d.\  product states naturally to permutation invariant states by considering an extension to Hudson's central limit theorem discussed in the appendix. Our theorem provides a further step into understanding the structure of fermionic states and 
provide a mathematical underpinning of mean field approaches, complementing previous results formulated or primarily investigated in the thermodynamic limit \cite{Crismale2012,Kraus2013}. 
Similar to the rich structures present in permutation invariant systems of distinguishable particles we expect that further generalization and insights can be obtained in the near future. It remains an interesting and important question whether fermionic mean field approaches can be bound in non-permutation invariant settings along the lines of Ref.\
\cite{Brandao2016}, to give rise to quality certificates of Hartree-Fock approaches based on interaction graphs
alone.

Let us also note that in bosonic systems, particle de Finetti theorems are easily available as the states are intrinsically symmetric under the exchange of particles and in addition the number of relevant single particle modes, which controls the local dimension of each particle, can be much smaller than the total particle number, e.g., in the setting of Bose-Einstein condensation. Both features are absent in fermionic systems such that it remains open and subject of future research if a non-trivial fermionic particle de Finetti theorem can be formulated which would allow to bound the Hartree-Fock approach on more general grounds and might yield deeper and important insights into the structure of fermionic systems. It is the hope that the present work stimulates such further approaches.

\section{Acknowledgements}

We thank 
P.\ \'Cwikli\'nski, 
M.\ Gluza,
A.\ Harrow,
A.\ Leverrier, and
S. Wehner  for discussions and comments. 
This work has been supported by the ERC (TAQ), 
the Templeton Foundation,
 the DFG (CRC 183 B01, EI 519/9-1, EI 519/7-1),
  the Studien\-stif\-tung des Deutschen Volkes, and the EC (AQuS).

\bibliographystyle{naturemag}

\begin{thebibliography}{10}
\expandafter\ifx\csname url\endcsname\relax
  \def\url#1{\texttt{#1}}\fi
\expandafter\ifx\csname urlprefix\endcsname\relax\def\urlprefix{URL }\fi
\providecommand{\bibinfo}[2]{#2}
\providecommand{\eprint}[2][]{\url{#2}}

\bibitem{Stromer1969}
\bibinfo{author}{St\o{}rmer, E.}
\newblock \bibinfo{title}{Symmetric states of infinite tensor products of
  {C$^\ast$}-algebras}.
\newblock \emph{\bibinfo{journal}{J. Funct. Anal.}}
  \textbf{\bibinfo{volume}{3}}, \bibinfo{pages}{48} (\bibinfo{year}{1969}).

\bibitem{Hudson1976}
\bibinfo{author}{Hudson, R.~L.} \& \bibinfo{author}{Moody, G.~R.}
\newblock \bibinfo{title}{Locally normal symmetric states and an analogue of de
  {F}inetti's theorem}.
\newblock \emph{\bibinfo{journal}{Z. Wahrschein. verw. Geb.}}
  \textbf{\bibinfo{volume}{33}}, \bibinfo{pages}{343} (\bibinfo{year}{1976}).

\bibitem{Caves2002}
\bibinfo{author}{Caves, C.~M.}, \bibinfo{author}{Fuchs, C.~A.} \&
  \bibinfo{author}{Schack, R.}
\newblock \bibinfo{title}{Unknown quantum states: {T}he quantum de {F}inetti
  representation}.
\newblock \emph{\bibinfo{journal}{J. Math. Phys.}}
  \textbf{\bibinfo{volume}{43}}, \bibinfo{pages}{4537} (\bibinfo{year}{2002}).

\bibitem{Koenig2005}
\bibinfo{author}{K\"onig, R.} \& \bibinfo{author}{Renner, R.}
\newblock \bibinfo{title}{A de {F}inetti representation for finite symmetric
  quantum states}.
\newblock \emph{\bibinfo{journal}{J. Math. Phys.}}
  \textbf{\bibinfo{volume}{46}}, \bibinfo{pages}{122108}
  (\bibinfo{year}{2005}).

\bibitem{Christandl2007}
\bibinfo{author}{Christandl, M.}, \bibinfo{author}{K{\"o}nig, R.},
  \bibinfo{author}{Mitchison, G.} \& \bibinfo{author}{Renner, R.}
\newblock \bibinfo{title}{One-and-a-half quantum de {F}inetti theorems}.
\newblock \emph{\bibinfo{journal}{Commun. Math. Phys.}}
  \textbf{\bibinfo{volume}{273}}, \bibinfo{pages}{473} (\bibinfo{year}{2007}).

\bibitem{Renner2007}
\bibinfo{author}{Renner, R.}
\newblock \bibinfo{title}{Symmetry of large physical systems implies
  independence of subsystems}.
\newblock \emph{\bibinfo{journal}{Nat. Phys.}} \textbf{\bibinfo{volume}{3}},
  \bibinfo{pages}{645} (\bibinfo{year}{2007}).

\bibitem{RennerThesis}
\bibinfo{author}{Renner, R.}
\newblock \bibinfo{title}{Security of quantum key distribution}.
\newblock \emph{\bibinfo{journal}{Int. J. Quant. Inf.}}
  \textbf{\bibinfo{volume}{06}}, \bibinfo{pages}{1} (\bibinfo{year}{2008}).

\bibitem{Renner2009}
\bibinfo{author}{Renner, R.} \& \bibinfo{author}{Cirac, J.~I.}
\newblock \bibinfo{title}{de {F}inetti representation theorem for
  infinite-dimensional quantum systems and applications to quantum
  cryptography}.
\newblock \emph{\bibinfo{journal}{Phys. Rev. Lett.}}
  \textbf{\bibinfo{volume}{102}}, \bibinfo{pages}{110504}
  (\bibinfo{year}{2009}).

\bibitem{Koenig2009}
\bibinfo{author}{K\"onig, R.} \& \bibinfo{author}{Mitchison, G.}
\newblock \bibinfo{title}{A most compendious and facile quantum de {F}inetti
  theorem}.
\newblock \emph{\bibinfo{journal}{J. Math. Phys.}}
  \textbf{\bibinfo{volume}{50}}, \bibinfo{pages}{012105}
  (\bibinfo{year}{2009}).

\bibitem{Christandl2009}
\bibinfo{author}{Christandl, M.} \& \bibinfo{author}{Toner, B.}
\newblock \bibinfo{title}{Finite de {F}inetti theorem for conditional
  probability distributions describing physical theories}.
\newblock \emph{\bibinfo{journal}{J. Math. Phys.}}
  \textbf{\bibinfo{volume}{50}}, \bibinfo{pages}{042104}
  (\bibinfo{year}{2009}).

\bibitem{Brandao2011}
\bibinfo{author}{Brandao, F. G. S.~L.}, \bibinfo{author}{Christandl, M.} \&
  \bibinfo{author}{Yard, J.}
\newblock \bibinfo{title}{Faithful squashed entanglement}.
\newblock \emph{\bibinfo{journal}{Commun. Math. Phys.}}
  \textbf{\bibinfo{volume}{306}}, \bibinfo{pages}{805} (\bibinfo{year}{2011}).

\bibitem{WehnerDoherty}
\bibinfo{author}{Doherty, A.~C.} \& \bibinfo{author}{Wehner, S.}
\newblock \bibinfo{title}{Convergence of sdp hierarchies for polynomial
  optimization on the hypersphere}  (\bibinfo{year}{2012}).
\newblock \bibinfo{note}{ArXiv:1210.5048}.

\bibitem{Li2015}
\bibinfo{author}{Li, K.} \& \bibinfo{author}{Smith, G.}
\newblock \bibinfo{title}{Quantum de {F}inetti theorem under fully-one-way
  adaptive measurements}.
\newblock \emph{\bibinfo{journal}{Phys. Rev. Lett.}}
  \textbf{\bibinfo{volume}{114}}, \bibinfo{pages}{160503}
  (\bibinfo{year}{2015}).

\bibitem{Brandao2017}
\bibinfo{author}{Brandao, F. G. S.~L.} \& \bibinfo{author}{Harrow, A.~W.}
\newblock \bibinfo{title}{Quantum de {F}inetti theorems under local
  measurements with applications}.
\newblock \emph{\bibinfo{journal}{Commun. Math. Phys.}}
  \textbf{\bibinfo{volume}{353}}, \bibinfo{pages}{469} (\bibinfo{year}{2017}).

\bibitem{Brandao2010}
\bibinfo{author}{Brandao, F. G. S.~L.} \& \bibinfo{author}{Plenio, M.~B.}
\newblock \bibinfo{title}{A generalization of quantum {S}tein's lemma}.
\newblock \emph{\bibinfo{journal}{Commun. Math. Phys.}}
  \textbf{\bibinfo{volume}{295}}, \bibinfo{pages}{791} (\bibinfo{year}{2010}).

\bibitem{Brandao2016}
\bibinfo{author}{Brandao, F. G. S.~L.} \& \bibinfo{author}{Harrow, A.~W.}
\newblock \bibinfo{title}{Product-state approximations to quantum states}.
\newblock \emph{\bibinfo{journal}{Commun. Math. Phys.}}
  \textbf{\bibinfo{volume}{342}}, \bibinfo{pages}{47} (\bibinfo{year}{2016}).

\bibitem{Trimborn2016}
\bibinfo{author}{Trimborn, F.}, \bibinfo{author}{Werner, R.~F.} \&
  \bibinfo{author}{Witthaut, D.}
\newblock \bibinfo{title}{Quantum de {F}inetti theorems and mean-field theory
  from quantum phase space representations}.
\newblock \emph{\bibinfo{journal}{J. Phys. A}} \textbf{\bibinfo{volume}{49}},
  \bibinfo{pages}{135302} (\bibinfo{year}{2016}).

\bibitem{Leverrier}
\bibinfo{author}{Leverrier, A.}
\newblock \bibinfo{title}{$su(p,q)$ coherent states and a Gaussian de Finetti
  theorem} \bibinfo{note}{ArXiv:1612.05080}.

\bibitem{Ashcroft1976}
\bibinfo{author}{Ashcrof, N.~W.} \& \bibinfo{author}{Mermin, N.~D.}
\newblock \emph{\bibinfo{title}{Solid state physics}}
  (\bibinfo{publisher}{Brooks/Cole}, \bibinfo{address}{Belmont},
  \bibinfo{year}{1976}).

\bibitem{Szabo1996}
\bibinfo{author}{Szabo, A.} \& \bibinfo{author}{Ostlund, N.}
\newblock \emph{\bibinfo{title}{Modern quantum chemistry: Introduction to
  advanced electronic structure theory}} (\bibinfo{publisher}{Dover
  Publications}, \bibinfo{address}{Mineola}, \bibinfo{year}{1996}).

\bibitem{Ciarlet2003}
\bibinfo{author}{Ciarlet, P.~G.} \& \bibinfo{author}{Lions, C., J. L.and
  Le~Bris}.
\newblock \emph{\bibinfo{title}{Special volume: Computational chemistry}},
  vol.~\bibinfo{volume}{10} of \emph{\bibinfo{series}{Handbook of numerical
  analysis}} (\bibinfo{publisher}{Elsevier}, \bibinfo{address}{Amsterdam},
  \bibinfo{year}{2003}).

\bibitem{Banuls2007}
\bibinfo{author}{Ba\~nuls, M.-C.}, \bibinfo{author}{Cirac, J.~I.} \&
  \bibinfo{author}{Wolf, M.~M.}
\newblock \bibinfo{title}{Entanglement in fermionic systems}.
\newblock \emph{\bibinfo{journal}{Phys. Rev. A}} \textbf{\bibinfo{volume}{76}},
  \bibinfo{pages}{022311} (\bibinfo{year}{2007}).

\bibitem{Kraus2013}
\bibinfo{author}{Kraus, C.~V.}, \bibinfo{author}{Lewenstein, M.} \&
  \bibinfo{author}{Cirac, J.~I.}
\newblock \bibinfo{title}{{Ground states of fermionic lattice Hamiltonians with
  permutation symmetry}}.
\newblock \emph{\bibinfo{journal}{Phys. Rev. A}} \textbf{\bibinfo{volume}{88}},
  \bibinfo{pages}{022335} (\bibinfo{year}{2013}).

\bibitem{Crismale2012}
\bibinfo{author}{Crismale, V.} \& \bibinfo{author}{Fidaleo, F.}
\newblock \bibinfo{title}{De {F}inetti theorem on the {CAR} algebra}.
\newblock \emph{\bibinfo{journal}{Commun. Math. Phys.}}
  \textbf{\bibinfo{volume}{315}}, \bibinfo{pages}{135} (\bibinfo{year}{2012}).

\bibitem{Hudson1973}
\bibinfo{author}{Hudson, R.~L.}
\newblock \bibinfo{title}{A quantum-mechanical central limit theorem for
  anti-commuting observables}.
\newblock \emph{\bibinfo{journal}{J. Appl. Probab.}}
  \textbf{\bibinfo{volume}{10}}, \bibinfo{pages}{502} (\bibinfo{year}{1973}).

\bibitem{goderis1989}
\bibinfo{author}{Goderis, D.} \& \bibinfo{author}{Vets, P.}
\newblock \bibinfo{title}{Central limit theorem for mixing quantum systems and
  the {CCR}-algebra of fluctuations}.
\newblock \emph{\bibinfo{journal}{Comm. Math. Phys.}}
  \textbf{\bibinfo{volume}{122}}, \bibinfo{pages}{249--265}
  (\bibinfo{year}{1989}).

\bibitem{matsui2002}
\bibinfo{author}{Matsui, T.}
\newblock \bibinfo{title}{Bosonic central limit theorem for the one-dimensional
  {XY} model}.
\newblock \emph{\bibinfo{journal}{Rev. Math. Phys.}}
  \textbf{\bibinfo{volume}{14}}, \bibinfo{pages}{675--700}
  (\bibinfo{year}{2002}).

\bibitem{Zimboras2014}
\bibinfo{author}{Zimbor\'as, Z.}, \bibinfo{author}{Zeier, R.},
  \bibinfo{author}{Keyl, M.} \& \bibinfo{author}{Schulte-Herbr\"uggen, T.}
\newblock \bibinfo{title}{A dynamic systems approach to fermions and their
  relation to spins}.
\newblock \emph{\bibinfo{journal}{EPJ Quant. Tech.}}
  \textbf{\bibinfo{volume}{1}}, \bibinfo{pages}{11} (\bibinfo{year}{2014}).

\bibitem{Melo2013}
\bibinfo{author}{de~Melo, F.}, \bibinfo{author}{\'Cwikli\'nski, P.} \&
  \bibinfo{author}{Terhal, B.~M.}
\newblock \bibinfo{title}{The power of noisy fermionic quantum computation}.
\newblock \emph{\bibinfo{journal}{New J. Phys.}} \textbf{\bibinfo{volume}{15}},
  \bibinfo{pages}{013015} (\bibinfo{year}{2013}).

\bibitem{Gluza2016}
\bibinfo{author}{Gluza, M.}, \bibinfo{author}{Krumnow, C.},
  \bibinfo{author}{Friesdorf, M.}, \bibinfo{author}{Gogolin, C.} \&
  \bibinfo{author}{Eisert, J.}
\newblock \bibinfo{title}{{Equilibration via Gaussification in fermionic
  lattice systems}}.
\newblock \emph{\bibinfo{journal}{Phys. Rev. Lett.}}
  \textbf{\bibinfo{volume}{117}}, \bibinfo{pages}{190602}
  (\bibinfo{year}{2016}).

\bibitem{CramerCLT}
\bibinfo{author}{Cramer, M.} \& \bibinfo{author}{Eisert, J.}
\newblock \bibinfo{title}{A quantum central limit theorem for non-equilibrium
  systems: Exact local relaxation of correlated states}.
\newblock \emph{\bibinfo{journal}{New J. Phys.}} \textbf{\bibinfo{volume}{12}},
  \bibinfo{pages}{055020} (\bibinfo{year}{2010}).

\bibitem{Altunbulak2008}
\bibinfo{author}{Altunbulak, M.} \& \bibinfo{author}{Klyachko, A.~A.}
\newblock \bibinfo{title}{{The Pauli principle revisited}}.
\newblock \emph{\bibinfo{journal}{Commun. Math. Phys.}}
  \textbf{\bibinfo{volume}{282}}, \bibinfo{pages}{287} (\bibinfo{year}{2008}).

\bibitem{Schilling2013}
\bibinfo{author}{Schilling, C.}, \bibinfo{author}{Gross, D.} \&
  \bibinfo{author}{Christandl, M.}
\newblock \bibinfo{title}{Pinning of fermionic occupation numbers}.
\newblock \emph{\bibinfo{journal}{Phys. Rev. Lett.}}
  \textbf{\bibinfo{volume}{110}}, \bibinfo{pages}{040404}
  (\bibinfo{year}{2013}).

\bibitem{RevModPhys.35.668}
\bibinfo{author}{Coleman, A.~J.}
\newblock \bibinfo{title}{Structure of fermion density matrices}.
\newblock \emph{\bibinfo{journal}{Rev. Mod. Phys.}}
  \textbf{\bibinfo{volume}{35}}, \bibinfo{pages}{668--686}
  (\bibinfo{year}{1963}).

\bibitem{Greplova2013}
\bibinfo{author}{Greplov\'a, E.}
\newblock \emph{\bibinfo{title}{{Quantum information with fermionic Gaussian
  states}}}.
\newblock Master's thesis, \bibinfo{school}{LMU M\"unchen}
  (\bibinfo{year}{2013}).
\newblock
  \urlprefix\url{http://www2.mpq.mpg.de/Theorygroup/CIRAC/wiki/images/9/9f/Greplova_Master_thesis_2013.pdf}.

\bibitem{Extremality}
\bibinfo{author}{Wolf, M.~M.}, \bibinfo{author}{Giedke, G.} \&
  \bibinfo{author}{Cirac, J.~I.}
\newblock \bibinfo{title}{{Extremality of Gaussian quantum states}}.
\newblock \emph{\bibinfo{journal}{Phys. Rev. Lett.}}
  \textbf{\bibinfo{volume}{96}}, \bibinfo{pages}{080502}
  (\bibinfo{year}{2006}).

\bibitem{PhysRevLett.119.020501}
\bibinfo{author}{Pastawski, F.}, \bibinfo{author}{Eisert, J.} \&
  \bibinfo{author}{Wilming, H.}
\newblock \bibinfo{title}{Towards holography via quantum source-channel codes}.
\newblock \emph{\bibinfo{journal}{Phys. Rev. Lett.}}
  \textbf{\bibinfo{volume}{119}}, \bibinfo{pages}{020501}
  (\bibinfo{year}{2017}).

\end{thebibliography}

\appendix

\section{Extension of Hudson's central limit theorem}
\label{app:HudsonExtended}
Given a fermionic state $\rho$ on $\numlocmodes$ modes, it is known that a certain reduction of
 $\rho^{\otimes \syssize}$ 
converges to the Gaussian state with the same second moments as $\rho$ for $\syssize\rightarrow \infty$ by the central limit theorem formulated by Hudson \cite{Hudson1973}. 
In its precise formulation the theorem states that for any $\syssize\in\N$ we define the modes
\begin{equation}
 a_0^\alpha = \frac{1}{\sqrt{\syssize}}\sum\limits_{j=1}^\syssize f_j^\alpha.
\end{equation}
Then for any observable $A$ that can be written with the modes $a_0^\alpha$, $\hc{a_0^\alpha}$ 
only we obtain
\begin{equation}
 \lim\limits_{\syssize\rightarrow\infty} \tr(\rho^{\otimes\syssize}A) = \tr(\rho_G \tilde{A}),
\end{equation}
where $\tilde{A}$ is constructed from $A$ by replacing all $a_0^\alpha$ and $\hc{a_0^\alpha}$ by $f_1^\alpha$ and $\hc{f_1^\alpha}$ and $\rho_G$ denotes the Gaussian state on $\numlocmodes$ modes with the same second moments as $\rho$.
We can take this result a step further and can investigate $\rho^{\otimes \syssize}$ globally. 
For this, consider the Fourier modes
\begin{equation}
 a_\momt^\alpha = \frac{1}{\sqrt{\syssize}}\sum\limits_{j=1}^\syssize \eexp{2\pi \iu \frac{j\momt}{\syssize}} f_j^\alpha ,
\end{equation}
with $\momt=-\lfloor (\syssize-1)/2\rfloor,\ldots,\lfloor \syssize/2\rfloor$. 
We then obtain:
\begin{lm}[Fermionic central limit theorem]
\label{lm:Hudson20}
 Given a fermionic state $\rho$ on $\syssize\in\N$ sites and $\numlocmodes$ modes per site. We then obtain for any $\cuo=2,4,\ldots,2\numlocmodes\syssize$, sequences $c_1,\ldots,c_\cuo$, $\alpha_1,\ldots,\alpha_\cuo$ and $\momt_1,\ldots,\momt_\cuo$ with $c_j\in\{\pm1\}$, $\alpha_j\in[\numlocmodes]$ and $\momt_j$ as above such that all triples $(c_j,\alpha_j,\momt_j)$ are distinct that cumulants are bounded as
 \begin{equation}
  K_\cuo^{\rho^{\otimes\syssize}}(a_{\momt_1}^{c_1,\alpha_1},\ldots,a_{\momt_\cuo}^{c_\cuo,\alpha_\cuo}) = \frac{1}{\sqrt{\syssize}^\cuo}K_\cuo^{\rho}(f_1^{c_1,\alpha_1},\ldots,f_1^{c_\cuo,\alpha_\cuo})\sum\limits_{j=1}^\syssize\eexp{\frac{2\pi\iu}{\syssize}\sum\limits_{l=1}^\cuo c_l\momt_l j}.
 \end{equation}
\end{lm}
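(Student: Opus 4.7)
The plan is to prove the lemma in three steps by exploiting the multilinearity of cumulants together with the vanishing of mixed cumulants for a product state.

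First, I would rewrite the left-hand side using multilinearity of cumulants in each argument. Observing that
\begin{equation}
a_\momt^{c,\alpha} = \frac{1}{\sqrt{\syssize}}\sum_{j=1}^\syssize \eexp{\frac{2\pi \iu c j \momt}{\syssize}} f_j^{c,\alpha},
\end{equation}
multilinearity yields
\begin{equation}
K_\cuo^{\rho^{\otimes\syssize}}(a_{\momt_1}^{c_1,\alpha_1},\ldots,a_{\momt_\cuo}^{c_\cuo,\alpha_\cuo}) = \frac{1}{\sqrt{\syssize}^\cuo}\sum_{j_1,\ldots,j_\cuo=1}^\syssize \eexp{\frac{2\pi\iu}{\syssize}\sum_l c_l\momt_l j_l} K_\cuo^{\rho^{\otimes\syssize}}(f_{j_1}^{c_1,\alpha_1},\ldots,f_{j_\cuo}^{c_\cuo,\alpha_\cuo}).
\end{equation}

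Second, and most importantly, I would establish the vanishing of mixed cumulants, namely that
\begin{equation}
K_\cuo^{\rho^{\otimes\syssize}}(f_{j_1}^{c_1,\alpha_1},\ldots,f_{j_\cuo}^{c_\cuo,\alpha_\cuo}) = \begin{cases} K_\cuo^{\rho}(f_{1}^{c_1,\alpha_1},\ldots,f_{1}^{c_\cuo,\alpha_\cuo}) & \text{if } j_1=\ldots=j_\cuo,\\ 0 & \text{otherwise.}\end{cases}
\end{equation}
I would proceed by induction on $\cuo$. The base case $\cuo=2$ is immediate: the only even-sized partition of $[2]$ gives $K_2^{\rho^{\otimes\syssize}}(f_{j_1}^{c_1,\alpha_1},f_{j_2}^{c_2,\alpha_2})=\tr(\rho^{\otimes\syssize} f_{j_1}^{c_1,\alpha_1}f_{j_2}^{c_2,\alpha_2})$, which for $j_1\neq j_2$ factorizes into a product of one-point functions that vanish by the parity superselection rule. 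For the induction step, I invert the moment-cumulant relation to isolate $K_\cuo^{\rho^{\otimes\syssize}}$ as an alternating sum over even-sized partitions of the moment minus lower-order contributions. For the product state, the moment factorizes across sites and vanishes whenever any site carries an odd number of operators, so only partitions that refine the grouping of indices by site survive. By the inductive hypothesis, these surviving lower-order terms reassemble (with the correct Koszul sign $\sigma_P$) into exactly the factorized moment, so their subtraction leaves $K_\cuo^{\rho^{\otimes\syssize}}=0$ whenever the $j_l$ span more than one site. When all $j_l$ coincide, only the single-site partition structure is relevant and the recursion reduces identically to the one defining $K_\cuo^\rho$.

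Third, I would substitute step two back into the expansion of step one: only the diagonal contribution $j_1=\ldots=j_\cuo=j$ survives, leaving
\begin{equation}
K_\cuo^{\rho^{\otimes\syssize}}(a_{\momt_1}^{c_1,\alpha_1},\ldots,a_{\momt_\cuo}^{c_\cuo,\alpha_\cuo}) = \frac{1}{\sqrt{\syssize}^\cuo}K_\cuo^{\rho}(f_1^{c_1,\alpha_1},\ldots,f_1^{c_\cuo,\alpha_\cuo})\sum_{j=1}^\syssize \eexp{\frac{2\pi\iu j}{\syssize}\sum_l c_l\momt_l},
\end{equation}
which is the claimed identity.

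The main obstacle is the inductive step two: the bookkeeping of the anticommutation signs $\sigma_P$ when comparing the moment-cumulant recursion applied globally to $\rho^{\otimes\syssize}$ against the independent recursions applied at each site individually. The parity superselection rule is what makes this work, since it restricts nonvanishing moment contributions to partitions that group an even number of operators at every site, in turn forcing the nonvanishing partitions to respect the site structure and allowing the factorized moment to be written as a product of same-site cumulants whose product expansion cancels exactly against the moment in the cumulant recursion.
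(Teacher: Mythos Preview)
Your proposal is correct and rests on the same ingredients as the paper's proof---induction on $\cuo$, factorization of moments for the product state $\rho^{\otimes\syssize}$, and the parity constraint forcing each site to carry an even number of operators---but it organizes them differently. You first invoke multilinearity of the cumulants to pull the Fourier sums outside, then prove as a standalone lemma that mixed-site cumulants vanish; the paper instead keeps the Fourier modes throughout, expands the full moment $\tr(\rho^{\otimes\syssize}\prod_l a_{\momt_l}^{c_l,\alpha_l})$ as a sum over site assignments $Z\in\Parts_{[\cuo]}^{\syssize}$, applies the moment--cumulant relation site by site, and then observes that summing over $Z$ reproduces the global moment--cumulant relation with each block summed independently over sites. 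Your route is slightly more modular (the vanishing of mixed cumulants is a reusable fact about product states), while the paper's is a single pass without appealing to multilinearity explicitly; the sign bookkeeping you flag as the main obstacle is handled identically in both, since the even block sizes guarantee $\sigma_P$ is independent of the ordering of blocks and factors as the site-reordering sign times the product of the within-site $\sigma_{P_s}$.
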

Note that we could completely decouple decouple the state $\rho^{\otimes \syssize}$ into a $\syssize$-fold copy of the same Gaussian state if would not have used the Fourier modes but the modes created from a tensor product of
Hadamard gates as transformation which essentially follows from the considerations in Ref.\ \cite{Greplova2013},
building up upon Ref.\ \cite{Extremality}. Using such central limit theorems, the \emph{extremality} of fermionic 
Gaussian states for
a number of interesting properties can be derived \cite{Extremality,Greplova2013}, beyond the observation that the maximum
von-Neumann entropy $\rho\mapsto S(\rho)$ for given second moments is attained by Gaussian states, and the 
minimum of the \emph{coherent information}  $\rho\mapsto S(\rho_A)-S(\rho)$, 
for given second moments, $A$ reflecting the modes of a subsystem, is again assumed for fermionic Gaussian states \cite{PhysRevLett.119.020501}. 

\begin{proof}
 We prove this lemma by induction. Let $\cuo=2$.
 We then find
\begin{align}
 K_2^{\rho^{\otimes \syssize}}(a_{\momt_1}^{c_1,\alpha_1},a_{\momt_2}^{c_2,\alpha_2}) = \frac{1}{\syssize}\sum\limits_{j,l=1}^\syssize \eexp{2\pi \iu \frac{c_1\momt_1j+c_2\momt_2 l}{\syssize}}\tr(\rho^{\otimes \syssize}f_j^{c_1,\alpha_1}f_{l}^{c_2,\alpha_2}).
\end{align}
As $\rho$ is an even operator the terms of the above sum are non-zero only for $l=j$ such that we obtain
\begin{align}
 K_2^{\rho^{\otimes \syssize}}(a_{\momt_1}^{c_1,\alpha_1},a_{\momt_2}^{c_2,\alpha_2})
 	&= \frac{1}{\syssize}\sum\limits_{j=1}^\syssize \eexp{2\pi \iu \frac{(c_1\momt_1+c_2\momt_2) j}{\syssize}}\tr(\rho^{\otimes \syssize}f_j^{c_1,\alpha_1}f_{j}^{c_2,\alpha_2}) \\ 
 	&= \frac{1}{\sqrt{\syssize}^2}K_2^{\rho}(f_1^{c_1,\alpha_1}f_{1}^{c_2,\alpha_2})\sum\limits_{j=1}^\syssize \eexp{\frac{2\pi \iu}{\syssize}(c_1\momt_1+c_2\momt_2)j},
\end{align}
as the expectation value is independent of $j$.
In order to access higher cumulants for $\cuo>2$ consider
\begin{equation}
\sum\limits_{P\in \Parts_{[\cuo]}}\sigma_P \prod\limits_{p\in P } K_{|p|}^{\rho^{\otimes\syssize}}((a_{\momt_l}^{c_l,\alpha_l})_{l\in p}) = \frac{1}{\sqrt{\syssize}^\cuo}\tr\Bigl(\rho^{\otimes\syssize}\prod\limits_{l=1}^\cuo \sum\limits_{j=1}^\syssize \eexp{\frac{2\pi\iu}{\syssize}c_l\momt_l j} f_j^{c_l,\alpha_l}\Bigr).
\label{eq:HudsonHelper1}
\end{equation}
Denoting by $\Parts_{[\cuo]}^{\syssize}$ the set of all partitions of $[\cuo]$ into $\syssize$ increasingly ordered sets of even size including empty sets. 
The idea is now that every of such partitions labels one configuration in the product of the sums on the right hand side of Eq.~\eqref{eq:HudsonHelper1} in the sense that for $(z_1,\ldots,z_\syssize) = Z\in \Parts_{[\cuo]}^{\syssize}$ the indices contained in $z_j$ are associated to site $j$ (with no index associated in case of $z_j$ being empty).
We can then write 
\begin{equation}
\sum\limits_{P\in \Parts_{[\cuo]}}\sigma_P \prod\limits_{p\in P } K_{|p|}^{\rho^{\otimes\syssize}}((a_{\momt_l}^{c_l,\alpha_l})_{l\in p}) = \frac{1}{\sqrt{\syssize}^\cuo}\sum\limits_{Z\in P^{\syssize}_{[\cuo]}}\sigma_Z \prod\limits_{j\in[\cuo]:|z_j|>0}\tr\Bigl(\rho \prod\limits_{l\in z_j} \eexp{\frac{2\pi\iu}{\syssize}c_l\momt_l j} f_1^{c_l,\alpha_l}\Bigr).
\end{equation}
Inserting the definition of the cumulants results then in 
\begin{eqnarray}
&& \sum\limits_{P\in \Parts_{[\cuo]}}\sigma_P \prod\limits_{p\in P } K_{|p|}^{\rho^{\otimes\syssize}}((a_{\momt_l}^{c_l,\alpha_l})_{l\in p})\nonumber\\
 &=& \frac{1}{\sqrt{\syssize}^\cuo}\sum\limits_{Z\in P^{\syssize}_{[\cuo]}}\sigma_Z \prod\limits_{j\in[\cuo]:|z_j|>0}\sum\limits_{P\in \Parts_{z_j}}\sigma_{P}\prod_{p\in P}K_{|p|}^{\rho}((f_1^{c_l,\alpha_l})_{l\in p}) \eexp{\frac{2\pi\iu}{\syssize}\sum\limits_{l\in p}c_l\momt_l j}. \label{eq:HudsonHelper2}
\end{eqnarray}
The expression above looks rather convoluted. 
We can simplify it significantly by realizing that if we expand all sums and products that the collection of all $P$ in one term forms a partition of $[\cuo]$ while the partition $Z$ determines which index appears on which site. 
Summing over $Z$ will then yield that every partition appears on every site such that we can write
\begin{equation}
\sum\limits_{P\in \Parts_{[\cuo]}}\sigma_P \prod\limits_{p\in P } K_{|p|}^{\rho^{\otimes\syssize}}((a_{\momt_l}^{c_l,\alpha_l})_{l\in p}) = \frac{1}{\sqrt{\syssize}^\cuo}
\sum\limits_{P\in \Parts_{[\cuo]}}\sigma_P \prod\limits_{p\in P} \Bigl(\sum\limits_{j=1}^\syssize K_{|p|}^\rho((f_1^{c_l,\alpha_l})_{l\in p})\eexp{\frac{2\pi\iu}{\syssize}\sum\limits_{l\in p}c_l\momt_l j}\Bigr),
\end{equation}
where one can check that the sign $\sigma_P$ results from the product of $\sigma_Z$ and all $\sigma_P$'s in Eq.~\eqref{eq:HudsonHelper2}.
Inserting the assumption for all cases in which partition smaller then $\cuo$ appear yields
\begin{eqnarray}
&& \sum\limits_{P\in \Parts_{[\cuo]}}\sigma_P \prod\limits_{p\in P } K_{|p|}^{\rho^{\otimes\syssize}}((a_{\momt_l}^{c_l,\alpha_l})_{l\in p})\nonumber\\
 &= & 
\sum\limits_{P\in \Parts_{[\cuo]}:|P|>1}\sigma_P \prod\limits_{p\in P} K_{|p|}^{\rho^{\otimes\syssize}}((a_{\momt_l}^{c_l,\alpha_l})_{l\in p}) + \frac{1}{\sqrt{\syssize}^\cuo}K_{\cuo}^\rho((f_1^{c_l,\alpha_l})_{l\in [\cuo]})\sum\limits_{j=1}^\syssize \eexp{\frac{2\pi\iu}{\syssize}\sum\limits_{l=1}^\cuo c_l\momt_l j}\label{eq:HudsonHelper3}.
\end{eqnarray}
Eliminating the common terms on both sides of the equation yields the result.
\end{proof}

\end{document}